\newif\if@borderstar
\def\bordermatrix{\@ifnextchar*{%
\@borderstartrue\@bordermatrix@i}{\@borderstarfalse\@bordermatrix@i*}%
}
\def\@bordermatrix@i*{\@ifnextchar[{\@bordermatrix@ii}{\@bordermatrix@ii[()]}}
\def\@bordermatrix@ii[#1]#2{%
\begingroup
\m@th\@tempdima8.75\p@\setbox\z@\vbox{%
\def\cr{\crcr\noalign{\kern 2\p@\global\let\cr\endline }}%
\ialign {$##$\hfil\kern 2\p@\kern\@tempdima & \thinspace %
\hfil $##$\hfil && \quad\hfil $##$\hfil\crcr\omit\strut %
\hfil\crcr\noalign{\kern -\baselineskip}#2\crcr\omit %
\strut\cr}}%
\setbox\tw@\vbox{\unvcopy\z@\global\setbox\@ne\lastbox}%
\setbox\tw@\hbox{\unhbox\@ne\unskip\global\setbox\@ne\lastbox}%
\setbox\tw@\hbox{%
$\kern\wd\@ne\kern -\@tempdima\left\@firstoftwo#1%
\if@borderstar\kern2pt\else\kern -\wd\@ne\fi%
\global\setbox\@ne\vbox{\box\@ne\if@borderstar\else\kern 2\p@\fi}%
\vcenter{\if@borderstar\else\kern -\ht\@ne\fi%
\unvbox\z@\kern-\if@borderstar2\fi\baselineskip}%
\if@borderstar\kern-2\@tempdima\kern2\p@\else\,\fi\right\@secondoftwo#1 $%
}\null \;\vbox{\kern\ht\@ne\box\tw@}%
\endgroup
}
\newtheorem{lemma}{Lemma}
\newtheorem{Corollary}{Corollary}
\newtheorem{theorem}{Theorem}
\newtheorem{proposition}{Proposition}
\newtheorem{definition}{Definition}
\newcommand{\conv}{\operatorname{conv}}
\newcommand{\F}{\mathbb{F}}
\newcommand{\gauss}[3]{\genfrac{[}{]}{0pt}{}{#1}{#2}_{#3}}
\newcommand{\PG}{\operatorname{PG}}
\newcommand{\vek}[1]{\mathbf{#1}}
\newcommand{\hweight}{\mathrm{w}}
\newcommand{\card}[1]{\##1}
\newcommand{\removelatexerror}{\let\@latex@error\@gobble}
\newcommand{\proofpart}[2]{%
	\par
	\addvspace{\medskipamount}%
	\noindent\emph{Part #1: #2}\par\nobreak
	\addvspace{\smallskipamount}%
	\@afterheading
}
\renewenvironment{proof}[1][\proofname]{\par
  \pushQED{\qed}%
  \normalfont \topsep6\p@\@plus6\p@\relax
  \trivlist
  \item[\hskip\labelsep
        \itshape
%    #1\@addpunct{.}]\ignorespaces% DELETED
    #1\@addpunct{:}]\ignorespaces% ADDED
}{%
  \popQED\endtrivlist\@endpefalse
}
\renewcommand{\mathsf}[1]{#1}
\theoremstyle{definition}
\newtheorem{example}{Example}
\begin{document}

\pagestyle{plain}
%\pagenumbering{gobble}
%\pagestyle{empty}

\title{\LARGE \textbf{\Large A Geometric View of the Service Rates of Codes Problem and its \\[-0.75ex] Application to the Service Rate of the First Order Reed-Muller Codes}}

\author{\normalsize {$^\ast$}Fatemeh Kazemi, {$^\dagger$}Sascha Kurz, {$^\ddagger$}Emina Soljanin\\{\small {$^\ast$}Dept. of ECE, Texas A\&M University, USA (E-mail: fatemeh.kazemi@tamu.edu)}\\{\small {$^\dagger$} Dept. of Mathematics, University of Bayreuth, Germany (E-mail: sascha.kurz@uni-bayreuth.de)}\\ {\small {$^\ddagger$}Dept. of ECE, Rutgers University, USA (E-mail: emina.soljanin@rutgers.edu)}}

\maketitle 

\thispagestyle{plain}

\begin{abstract} 
Service rate is an important, recently introduced, performance metric associated with distributed coded storage systems. Among other interpretations, it measures the number of users that can be simultaneously served by the storage system. We introduce a geometric approach to address this problem. One of the most significant advantages of this approach over the existing approaches is that it allows one to derive bounds on the service rate of a code without explicitly knowing the list of all possible recovery sets. To illustrate the power of our geometric approach, we derive upper bounds on the service rates of the first order Reed-Muller codes and simplex codes. Then, we show how these upper bounds can be achieved. Furthermore, utilizing the proposed geometric technique, we show that given the service rate region of a code, a lower bound on the minimum distance of the code can be obtained.
\end{abstract}

%\begin{IEEEkeywords}
%Service rates of codes, a geometric approach, fist order Reed-Muller codes, simplex codes
%\end{IEEEkeywords}

\section{Introduction}

One of the most significant considerations in the design of distributed storage systems is serving a large number of users concurrently. The service rate has been recently recognized as an important performance metric that measures the number of users that can be simultaneously served by a storage system~\cite{noori2016storage,aktacs2017service,anderson2018service,peng2018distributed} that implements an erasure code. Maximizing the service rate reduces the latency experienced by users, particularly in a high traffic regime. 

The service rate problem considers a distributed storage system where $k$ files, ${f_1,\dots,f_k}$ are encoded into $n$, and stored across $n$ servers. File $f_i$ can be recovered by reading data from a single or a set of storage nodes, referred to as a recovery set for file $f_i$. Requests to download file $f_i$ arrive at rate $\lambda_i$, and can be split across its recovery sets. Server $l$ can simultaneously serve multiple requests if their cumulative arrival rate does not exceed the maximum service rate $\mu_l$ of server $l$. The service rate problem seeks to determine the service rate region of the coded storage system which is defined as the set of all request arrival rate vectors ${\boldsymbol{\lambda}=(\lambda_1,\dots,\lambda_k)}$ that can be served by this system.  

The service rate problem has been studied only in some special cases: 1) for MDS codes when $n \geq 2k$ and binary simplex codes in~\cite{aktacs2017service} and 2) for systems with arbitrary $n$ when ${k=2}$ in~\cite{aktacs2017service} and ${k=3}$ in~\cite{anderson2018service}. The existing techniques for solving the problem require enumeration of all possible recovery sets, which becomes increasingly complex when the number of files $k$ increases. Thus, introducing a technique which is not depending on the enumeration of recovery sets is of great significance. In this paper, we introduce a novel geometric approach with this goal in mind.

\subsection{Previous and Related Work}
The past two decades have seen an ever increasing interest in coding for storage and caching. Special codes that support efficient maintenance of storage under node failures have been proposed in e.g.,~\cite{dimakis2010network,dimakis2011survey,huang2013pyramid,gopalan2012locality,sardari2010memory}. The locality and availability of codes matter in such scenarios. This line of work mostly assumes infinite service rate (immediate service) for servers, and is primarily concerned with reliability of storage rather than with serving a large number of simultaneous users. 

Another line of work is focused on caching (see e.g.,~\cite{shanmugam2013femtocaching, maddah2016coding,hamidouche2014many}). In these work, the limited capacity of the backhaul link is considered as the main bottleneck, and the goal is usually to minimize its traffic by prefetching the popular contents at the storage nodes with limited size. Thus, these work do not address the scenarios such as live streaming, where many users wish to get the same content concurrently given the limited capacity of the access part of the network. 

The most related to this work are papers concerned with content download from coded storage. Load balancing in such systems has recently been addressed in~\cite{aktas2019load}. Memory allocation that maximizes the probability of successful content download under limited access to distributed storage was considered in e.g.,~\cite{allocation:sardariRFS10,allocation:LeongDH12} and references therein. Minimizing the service rate in these scenarios was considered in~\cite{noori2016storage, peng2018distributed}. This problem is similar to ours but it assumes that all users request the same content which is  encoded by an MDS code and stored on nodes with unlimited storage capacity. Fast content download from coded storage was considered in e.g.,~\cite{latency:JoshiSW15,latency:JoshiSW17}, and references therein. These papers strive to compute the download latency for increasingly complex queueing systems that appear in coded storage
\cite{Tompecs:AktasKS,ISIT:AktasS18,aktas2017simplex}. The service rate problem is related to the stability region of such queues.

\subsection{Main Contributions} 

We study the service rates of codes problem by introducing a novel geometric approach. This approach overcomes the main drawback of the previous work which are trying to solve this problem by formulating it as a sequence of linear programs (LP). There, one must exactly know all possible recovery sets to enumerate the constraints in each LP, and must also solve all the LPs.

Leveraging our novel geometric technique, we take initial steps towards deriving bounds on the service rates of some parametric classes of linear codes without explicitly listing the set of all possible recovery sets. In particular, we derive upper bounds on the service rates of the first order Reed-Muller codes and simplex codes, as two classes of codes which are most important in theory as well as in practice. Subsequently, we show how the derived upper bounds can be achieved. Moreover, utilizing the proposed geometric technique, we show that given the service rate region of a code, a lower bound on the minimum distance of the code can be derived.
\textit{The proofs of Lemmas and Corollaries can be found in the Appendix}.

\section{Problem Statement}\label{sec:Problem Formulation}

\subsection{Notation}
Throughout this paper, we denote vectors by bold-face small letters and matrices by bold-face capital letters. Let $\mathbb{N}$ denote the set of the non-negative integer numbers. Let $\mathbb{F}_q$ be a finite field for some prime power $q$, and $\mathbb{F}_q^n$ be the $n$-dimensional vector space over $\mathbb{F}_q$. Let us denote a q-ary linear code $\mathcal{C}$ of length $n$, dimension $k$ and minimum distance $d$ by $[n,k,d]_q$. We denote the Hamming weight of $\vek{x}$ by $\hweight(\vek{x})$. For a positive integer $k$, let $\vek{0}$ and $\vek{1}$ denote the all-zero and all-one column vectors of length $k$, respectively. Let $\vek{e}_i$ denotes a unit vector of length $k$, having a one at position $i$ and zeros elsewhere. For a positive integer $i$, define $[i]\triangleq\{1,\dots,i\}$. Let us denote the cardinality of a set or multiset $\mathcal{S}$ by $\#\mathcal{S}$.

\subsection{Service Rate of Codes}

Consider a storage system in which $k$ files $f_1,\dots,f_k$ are stored over $n$ servers, labeled $1,\dots,n$, using a linear $[n,k]_q$ code with generator matrix ${\mathbf{G} \in \mathbb{F}^{k\times n}_q}$. Let ${\vek{g}_j}$ denote the ${j}$th column of $\mathbf{G}$. A recovery set for the file $f_i$ is a set of stored symbols which can be used to recover file $f_i$. With respect to $\mathbf{G}$, a set ${R\subseteq [n]}$ is a recovery set for file $f_i$ if there exist ${\alpha_j}$'s ${\in \mathbb{F}_q}$ such that ${\sum_{j \in R}\alpha_j \mathbf{g}_j=\mathbf{e}_i}$, i.e., the unit vector $\mathbf{e}_i$ can be recovered by a linear combination of the columns of $\mathbf{G}$ indexed by the set $R$. W.l.o.g., we restrict our attention to the reduced recovery sets obtained by considering non-zero coefficients $\alpha_j$'s and linearly independent columns $\mathbf{g}_j$'s. 

Let $\mathcal{R}_{i}=\{R_{i,1},\dots,R_{i,t_i}\}$ be the $t_i\in \mathbb{N}$ recovery sets for file ${f_i}$. Let $\mu_l \in \mathbb{R}_{\geq 0}$ be the average rate at which the server ${l\in [n]}$ resolves received file requests. We denote the service rates of servers $1,\dots,n$ by a vector $\boldsymbol{\mu}=(\mu_1,\dots,\mu_n)$. We further assume that the requests to download file $f_i$ arrive at rate $\lambda_i$, $i\in [k]$. We denote the request rates for files $1,\dots,k$ by the vector $\boldsymbol{\lambda}=(\lambda_1,\dots,\lambda_k)$. We consider the class of scheduling strategies that assign a fraction of requests for a file to each of its recovery sets. Let $\lambda_{i,j}$ be the portion of requests for file $f_i$ that are assigned to the recovery set $R_{i,j}$, $j\in [t_i]$. 

The service rate region $\mathcal{S}(\mathbf{G},\boldsymbol{\mu}) \subseteq \mathbb{R}^k_{\geq 0}$ is defined as the set of all request vectors $\boldsymbol{\lambda}$ that can be served by a coded storage system with generator matrix $\mathbf{G}$ and service rate $\boldsymbol{\mu}$. Alternatively, $\mathcal{S}(\mathbf{G},\boldsymbol{\mu})$ can be defined as the set of all vectors $\boldsymbol{\lambda}$ for which there exist $\lambda_{i,j}\in \mathbb{R}_{\geq 0}$, $i\in [k]$ and $j\in [t_i]$, satisfying the following constraints:
\begin{subequations}\label{eq:1}
\begin{align}\label{condition_demand}
&\sum_{j=1}^{t_i} \lambda_{i,j}=\lambda_i, ~~~~~~~~~~~ \text{for all} ~~~ i\in [k], \\
&\sum_{i=1}^{k}~\sum_{\substack{{j\in [t_i]} \\ \label{condition_capacity} {l \in R_{i,j}}}} \lambda_{i,j} \leq \mu_l, ~~~ \text{for all} ~~~ l\in [n],\\
&\lambda_{i,j} \in \mathbb{R}_{\geq 0},~~~~~~~~~~~~~~ \text{for all} ~~~ i\in [k],~j\in [t_i].\label{eq:pos}
\end{align}
\end{subequations}
The constraints \eqref{condition_demand} guarantee that the demands for all files are served, and constraints \eqref{condition_capacity} ensure that no node receives requests at a rate in excess of its service rate.

\begin{lemma}\label{lem:convexity}
The service rate region $\mathcal{S}(\mathbf{G},\boldsymbol{\mu})$ is a non-empty, convex, closed, and bounded subset of $\mathbb{R}^k_{\geq 0}$.
\end{lemma}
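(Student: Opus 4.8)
The plan is to verify each of the four claimed properties directly from the defining linear system \eqref{eq:1}. First I would observe that $\mathcal{S}(\mathbf{G},\boldsymbol{\mu})$ is the projection onto the $\boldsymbol{\lambda}$-coordinates of the polyhedron $P\subseteq\mathbb{R}^{k}\times\mathbb{R}^{\sum_i t_i}$ cut out by the equalities \eqref{condition_demand}, the inequalities \eqref{condition_capacity}, and the nonnegativity constraints \eqref{eq:pos} on the $\lambda_{i,j}$; one should note in passing that the $\lambda_i$ are automatically nonnegative since each is a sum of nonnegative $\lambda_{i,j}$, so indeed $\mathcal{S}\subseteq\mathbb{R}^k_{\geq 0}$.

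For non-emptiness, $\boldsymbol{\lambda}=\vek{0}$ is served by taking all $\lambda_{i,j}=0$ (this uses that $\boldsymbol{\mu}\in\mathbb{R}^n_{\geq 0}$, so the capacity constraints are satisfied with slack); and each file has at least one recovery set, since the columns of $\mathbf{G}$ span $\mathbb{F}_q^k$, so the index sets $\mathcal{R}_i$ are nonempty and the system is well-posed. For convexity, given $\boldsymbol{\lambda}^{(1)},\boldsymbol{\lambda}^{(2)}\in\mathcal{S}$ with witnesses $\lambda^{(1)}_{i,j},\lambda^{(2)}_{i,j}$, the convex combination $\theta\boldsymbol{\lambda}^{(1)}+(1-\theta)\boldsymbol{\lambda}^{(2)}$ is witnessed by $\theta\lambda^{(1)}_{i,j}+(1-\theta)\lambda^{(2)}_{i,j}$, since all three families of constraints in \eqref{eq:1} are preserved under such combinations (equivalently: a linear projection of a convex set is convex). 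Closedness follows because $P$ is a closed polyhedron and its projection onto the $\boldsymbol{\lambda}$-coordinates is again closed — here I would either invoke that the projection of a polyhedron is a polyhedron (Fourier–Motzkin elimination), or argue directly that a limit $\boldsymbol{\lambda}^{(m)}\to\boldsymbol{\lambda}$ in $\mathcal{S}$ has witnesses $\boldsymbol{\lambda}^{(m)}_{\cdot,\cdot}$ that can be taken bounded (see boundedness below) and hence subconverge to a witness for $\boldsymbol{\lambda}$.

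For boundedness, fix any $i\in[k]$ and any recovery set $R_{i,j}$; pick a server $l\in R_{i,j}$ (a reduced recovery set is nonempty). Summing constraint \eqref{condition_capacity} over such an $l$ and discarding the nonnegative contributions of other files and other recovery sets gives $\lambda_{i,j}\le\mu_l\le\max_{l'}\mu_{l'}$, and then $\lambda_i=\sum_{j=1}^{t_i}\lambda_{i,j}\le t_i\cdot\max_{l'}\mu_{l'}$; more crudely, summing all capacity constraints bounds $\sum_{i}\lambda_i$ by $\sum_l\mu_l$ times the maximum multiplicity with which a server appears across recovery sets, so $\mathcal{S}$ lies in a bounded box. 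The same estimate shows the witnesses $\lambda_{i,j}$ may be taken in a fixed compact set, which closes the gap in the closedness argument above. The only mildly delicate point — and the one I would be most careful about — is the closedness/projection step: one must make sure the witness variables can be confined to a compact set before passing to a limit, which is exactly what the boundedness estimate supplies, so it is cleanest to prove boundedness first and then use it for closedness.
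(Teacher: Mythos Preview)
Your proof is correct and follows essentially the same route as the paper for non-emptiness and convexity: both exhibit $\vek{0}$ via the trivial assignment $\lambda_{i,j}=0$, and both take convex combinations of witnesses to verify convexity.

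The differences lie in the other two properties. For boundedness, the paper sums the capacity constraints~\eqref{condition_capacity} over all servers and uses $|R_{i,j}|\ge 1$ to obtain the clean global bound $\sum_{i=1}^k\lambda_i\le\sum_{l=1}^n\mu_l$; this specific inequality is later quoted in the paper, so it is worth deriving in that form rather than your coarser per-coordinate estimate $\lambda_i\le t_i\max_{l'}\mu_{l'}$. (Your ``more crudely'' sentence about multiplying by a multiplicity factor is unnecessary and points the wrong way; the direct sum already gives $\sum_i\lambda_i\le\sum_l\mu_l$.) For closedness, your argument is actually more careful than the paper's: the paper essentially infers closedness from boundedness, which is not a valid step on its own, whereas you correctly observe that $\mathcal{S}(\mathbf{G},\boldsymbol{\mu})$ is the coordinate projection of a polyhedron $P$ and invoke either Fourier--Motzkin elimination or a compactness-of-witnesses argument. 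Either of your routes is the right way to close that gap.
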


\begin{proposition}\cite{rockafellar1970convex}\label{prop:convhull}
For any set ${\mathcal{A}=\{\mathbf{v}_1,\dots,\mathbf{v}_p\} \subseteq \mathbb{R}^k}$, the convex hull of the set $\mathcal{A}$, denoted by $\conv(\mathcal{A})$, consists of all convex combinations of the elements of $\mathcal{A}$, i.e., all vectors of the form $\sum_{i=1}^{p}\gamma_i\mathbf{v}_i$, with $\gamma_i \geq 0$, $\sum_{i=1}^{p}\gamma_i=1$.
\end{proposition}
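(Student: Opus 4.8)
The plan is to prove the claimed identity $\conv(\mathcal{A}) = S$, where $S$ denotes the set of all convex combinations $\sum_{i=1}^{p}\gamma_i\mathbf{v}_i$ with $\gamma_i \geq 0$ and $\sum_{i=1}^{p}\gamma_i = 1$, by establishing the two inclusions separately. I would take as the working definition of $\conv(\mathcal{A})$ the smallest convex subset of $\mathbb{R}^k$ containing $\mathcal{A}$, equivalently the intersection of all convex sets $C$ with $\mathcal{A} \subseteq C$; this intersection is itself convex and contains $\mathcal{A}$, so it is well defined. Fixing this characterization at the outset is the only real decision to make, since the proposition is precisely the assertion that this abstract object coincides with the explicit set $S$.

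For the inclusion $\conv(\mathcal{A}) \subseteq S$, it suffices to check that $S$ is one of the convex supersets of $\mathcal{A}$ over which the intersection is taken. That $\mathcal{A} \subseteq S$ is immediate: $\mathbf{v}_i$ is the convex combination obtained by setting $\gamma_i = 1$ and all other coefficients to $0$. Convexity of $S$ is a one-line check: if $\mathbf{x} = \sum_i \alpha_i \mathbf{v}_i$ and $\mathbf{y} = \sum_i \beta_i \mathbf{v}_i$ lie in $S$ and $t \in [0,1]$, then $t\mathbf{x} + (1-t)\mathbf{y} = \sum_i \big(t\alpha_i + (1-t)\beta_i\big)\mathbf{v}_i$, whose coefficients are nonnegative and sum to $t + (1-t) = 1$. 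Hence $S$ is a convex set containing $\mathcal{A}$, and therefore $\conv(\mathcal{A}) \subseteq S$.

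For the reverse inclusion $S \subseteq \conv(\mathcal{A})$, I would show by induction on $m$ that every convex combination of $m$ of the points $\mathbf{v}_1,\dots,\mathbf{v}_p$ lies in every convex set $C$ with $\mathcal{A} \subseteq C$; intersecting over all such $C$ then gives $S \subseteq \conv(\mathcal{A})$. The base case $m = 1$ is just $\mathbf{v}_i \in C$. For the inductive step, consider $\mathbf{w} = \sum_{i=1}^{m}\gamma_i\mathbf{v}_i$ with $m \geq 2$, all $\gamma_i \geq 0$, and $\sum_{i=1}^m \gamma_i = 1$. If some $\gamma_i = 1$ the sum collapses to a single vertex and we are done; otherwise choose an index, say $m$, with $\gamma_m < 1$ and write $\mathbf{w} = (1-\gamma_m)\,\mathbf{w}' + \gamma_m\mathbf{v}_m$ with $\mathbf{w}' \coloneqq \sum_{i=1}^{m-1}\frac{\gamma_i}{1-\gamma_m}\mathbf{v}_i$. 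The coefficients of $\mathbf{w}'$ are nonnegative and sum to $1$, so $\mathbf{w}' \in C$ by the induction hypothesis, and then $\mathbf{w} \in C$ since $C$ is convex.

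I do not expect a genuine obstacle here: this is the textbook characterization of the convex hull of a finite point set. The only points requiring a little care are committing to the "smallest convex superset" definition of $\conv$ before arguing (so that the explicit description becomes something to prove rather than something assumed), and treating the degenerate case in the induction in which a single coefficient equals $1$, which would otherwise cause a division by zero in the renormalization step.
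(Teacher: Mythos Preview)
Your argument is correct and is the standard textbook proof: fix the definition of $\conv(\mathcal{A})$ as the intersection of all convex supersets, show the set $S$ of convex combinations is one such superset (giving $\conv(\mathcal{A})\subseteq S$), and then show by induction on the number of terms that any convex combination lies in every convex superset (giving $S\subseteq\conv(\mathcal{A})$). The degenerate case $\gamma_m=1$ is handled, and the renormalization step is clean.

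As for comparison with the paper: there is nothing to compare. The paper does not prove this proposition; it simply states it and attributes it to Rockafellar's \emph{Convex Analysis}. The appendix contains proofs only of the Lemmas and Corollaries, not of this cited proposition. Your write-up therefore supplies a self-contained justification where the paper relies on an external reference, which is a strict improvement in that sense.
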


\begin{Corollary}\label{lem:polytope}
The service rate region $\mathcal{S}(\mathbf{G},\boldsymbol{\mu}) \subseteq \mathbb{R}^k_{\geq 0}$ forms a polytope which can be expressed in two forms: as the intersection of a finite number of half spaces or as the convex hull of a finite set of vectors (vertices of the polytope).
\end{Corollary}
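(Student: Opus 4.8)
The plan is to leverage the defining constraints \eqref{eq:1} directly, together with Lemma~\ref{lem:convexity} and the Minkowski--Weyl theorem that a set is a polytope if and only if it is a bounded polyhedron. First I would observe that $\mathcal{S}(\mathbf{G},\boldsymbol{\mu})$ is the projection onto the $\boldsymbol{\lambda}$-coordinates of the set
\[
  \mathcal{Q} = \Bigl\{ (\boldsymbol{\lambda}, (\lambda_{i,j})) \in \mathbb{R}^{k}\times\mathbb{R}^{\sum_i t_i} : \eqref{condition_demand},\ \eqref{condition_capacity},\ \eqref{eq:pos} \text{ hold}\Bigr\}.
\]
Since every constraint in \eqref{condition_demand}, \eqref{condition_capacity}, \eqref{eq:pos} is a linear equality or a linear inequality in the variables $\boldsymbol{\lambda}$ and $(\lambda_{i,j})$, the set $\mathcal{Q}$ is a (closed, convex) polyhedron in $\mathbb{R}^{k+\sum_i t_i}$. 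The second step is to invoke the standard fact that a linear image (here, the coordinate projection $(\boldsymbol{\lambda},(\lambda_{i,j}))\mapsto\boldsymbol{\lambda}$) of a polyhedron is again a polyhedron --- this is exactly Fourier--Motzkin elimination of the variables $\lambda_{i,j}$. Hence $\mathcal{S}(\mathbf{G},\boldsymbol{\mu})$ is a polyhedron, i.e.\ an intersection of finitely many half-spaces.

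Next I would upgrade ``polyhedron'' to ``polytope.'' By Lemma~\ref{lem:convexity}, $\mathcal{S}(\mathbf{G},\boldsymbol{\mu})$ is bounded and non-empty, so it is a bounded polyhedron. By the Minkowski--Weyl / Weyl--Minkowski representation theorem for polyhedra, a subset of $\mathbb{R}^k$ is a bounded polyhedron precisely when it is the convex hull of a finite set of points; taking this finite generating set to be the vertices (extreme points) of $\mathcal{S}(\mathbf{G},\boldsymbol{\mu})$ --- which exist and are finite in number because the region is a bounded polyhedron --- yields the second description in the statement, namely as $\conv(\mathcal{A})$ for a finite vertex set $\mathcal{A}$, in the form given by Proposition~\ref{prop:convhull}. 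The equivalence of the two descriptions is then just the two directions of Minkowski--Weyl.

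The only genuinely nontrivial point is the projection step: one must be sure that eliminating the auxiliary flow variables $\lambda_{i,j}$ preserves polyhedrality and, in particular, does not destroy boundedness or finiteness of the facet description. I expect this to be the main obstacle to write cleanly, though it is entirely standard --- one can either cite Fourier--Motzkin elimination explicitly, or argue abstractly that the image of a finitely generated cone/polyhedron under a linear map is finitely generated. Everything else (linearity of the constraints, boundedness from Lemma~\ref{lem:convexity}, existence of finitely many vertices) is immediate. A short remark could also note that the number of half-spaces and the number of vertices are finite but may be exponential in $k$, which is precisely the computational difficulty that motivates the geometric approach of the paper.
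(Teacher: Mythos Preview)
Your argument is correct and in fact more careful than the paper's own proof. The paper simply invokes Lemma~\ref{lem:convexity} to assert that $\mathcal{S}(\mathbf{G},\boldsymbol{\mu})$ is convex and bounded, declares that this ``indicates'' it is a polytope, and then cites a reference for the equivalence of the half-space and vertex descriptions. Taken literally that step is a gap: convexity and boundedness alone do not yield polyhedrality (a Euclidean ball is convex and bounded). Your route closes this gap explicitly by exhibiting $\mathcal{S}(\mathbf{G},\boldsymbol{\mu})$ as the coordinate projection of the polyhedron $\mathcal{Q}$ cut out by the linear constraints~\eqref{eq:1}, and then appealing to Fourier--Motzkin elimination to conclude that the projection is itself a polyhedron; boundedness from Lemma~\ref{lem:convexity} then upgrades polyhedron to polytope via Minkowski--Weyl. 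What your approach buys is a self-contained justification of polyhedrality; what the paper's approach buys is brevity, at the cost of leaving the key step implicit. Your closing remark about the potentially exponential number of facets and vertices is also well placed, since it is exactly what motivates the rest of the paper.
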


The service rate problem seeks to determine the service rate region $\mathcal{S}(\mathbf{G},\boldsymbol{\mu})$ of a coded storage system with generator matrix $\mathbf{G}$ and service rate $\boldsymbol{\mu}$. Based on Corollary~\ref{lem:polytope}, the first algorithm for computing the service rate region that comes to mind is enumerating all vertices of the polytope $\mathcal{S}(\mathbf{G},\boldsymbol{\mu})$ and then computing the convex hull of the resulting vertices. As we will show shortly, this problem can be formulated as an optimization problem consisting of a sequence of LPs.

Given that any $k-1$ request arrival rates, ${\lambda_{i_1},\dots,\lambda_{i_{k-1}}}$, are zeros, there exists a maximum value of ${\lambda_{i_k}}$, denoted by ${\lambda^\star_{i_k}}$, where ${0 \leq \lambda^\star_{i_k} \leq \sum_{l=1}^n \mu_l}$ (see the proof of Lemma~\ref{lem:convexity}) such that ${\lambda^\star_{i_k}.\mathbf{e}_{i_k} \in \mathcal{S}(\mathbf{G},\boldsymbol{\mu})}$ and all vectors ${\lambda_{i_k}.\mathbf{e}_{i_k}}$ with ${\lambda_{i_k} > \lambda^\star_{i_k}}$ are not in ${\mathcal{S}(\mathbf{G},\boldsymbol{\mu})}$. These constrained optimization problems of finding the maximum value ${\lambda^\star_{i_k}}$ are all LPs. For ${i \in [k]}$, let ${\mathbf{v}_i=\lambda^\star_{i}\vek{e}_i}$. Since ${\mathcal{J}=\{\mathbf{0},\mathbf{v}_1,\mathbf{v}_2,\dots,\mathbf{v}_k\} \subseteq \mathcal{S}(\mathbf{G},\boldsymbol{\mu})}$, as an immediate consequence of Lemma~\ref{lem:convexity} and Proposition~\ref{prop:convhull}, the set ${\conv(\mathcal{J})}$ is contained in ${\mathcal{S}(\mathbf{G},\boldsymbol{\mu})}$. Starting with ${\mathcal{J}}$, one can iteratively enlarge ${\mathcal{J}}$ until the subsequent procedure stops. A facet $H$ of ${\conv(\mathcal{J})}$ described by a vector $\mathbf{h}\in\mathbb{R}_{_\ge 0}^k$ and $\eta\in\mathbb{R}_{\ge 0}$ as ${H=\left\{\mathbf{x}\in\mathbb{R}^k_{\ge 0}\,:\, \mathbf{h}^\top \mathbf{x}=\eta\right\}\cap\conv(\mathcal{J})}$, is chosen.

For the chosen facet $H$ described by the vector $\mathbf{h}\in\mathbb{R}_{_\ge 0}^k$ and $\eta\in\mathbb{R}_{\ge 0}$, one should solve $\max \mathbf{h}^\top\boldsymbol{\lambda}$, where ${\boldsymbol{\lambda}\in\mathbb{R}_{\ge 0}^k}$ satisfies the demand constraints~(\ref{condition_demand}) and the capacity constraints~(\ref{condition_capacity}). If the optimal target value is strictly larger than $\eta$, then the solution vector $\boldsymbol{\lambda}^\star$ is added to $\mathcal{J}$ and this procedure continues. 
For any ${\mathbf{h}=(h_1,\dots,h_k)}$, the primal LP is given by
\begin{equation}\label{vertex_LP_primal}
\max~~{\sum_{i=1}^k  h_i\lambda_i} 
\qquad\text{s.t.}~~\eqref{eq:1}~~\text{holds}. 
\end{equation}\vspace{-0.35cm}
\newline
and the corresponding dual LP is given by
\begin{eqnarray}\label{vertex_LP_dual}
  &\min &{\sum_{l=1}^{n} \gamma_l\mu_l} \\
  &\text{s.t.} &h_i \leq \beta_i~~~~~~~~~~~~~~\forall i\in [k]\nonumber\\
  &&\beta_i \leq \sum_{l\in R_{i,j}} \gamma_l~~~~~~~~\forall i\in [k], \forall j\in [t_i]~ \nonumber\\
  &&\beta_i \in \mathbb{R}~~~~~~~~~~~~~~~~\forall i\in [k]\nonumber
  \\&&\gamma_l \in \mathbb{R}_{\geq 0}~~~~~~~~~~~~~~\forall l\in [n] \nonumber
\end{eqnarray}
 
Bsaed on the Duality Theorem~\cite{matousek2007understanding}, if both the primal LP and the corresponding dual LP have feasible solutions, then their optimal target values coincide. It can be easily confirmed that a feasible solution for the primal LP~\eqref{vertex_LP_primal} can be given by ${\lambda_{i,j}=0}$ and ${\lambda_i=0}$, and a feasible solution for the dual LP~\eqref{vertex_LP_dual} can be given by ${\beta_i=h_i}$ and $\gamma_l=\sum_{i=1}^k h_i$. Thus, given a generator matrix $\mathbf{G}$ of a linear code and a service rate $\boldsymbol{\mu}$ of the servers in a distributed coded storage system, the LP~\eqref{vertex_LP_primal} can be utilized to compute the maximum value of ${\eta=\sum_{i=1}^{k} h_i\lambda_i}$, denoted by ${\eta^\star}$, for every ${\mathbf{h}\in\mathbb{R}_{\geq 0}^k}$. Having $\eta^\star$ at hand, it is known that all ${\boldsymbol{\lambda}\in\mathcal{S}(\mathbf{G},\boldsymbol{\mu})}$ satisfy ${\sum_{i=1}^k h_i\lambda_i\le \eta^\star}$, which is a valid inequality for $\mathcal{S}(\mathbf{G},\boldsymbol{\mu})$. 

The downside of this approach is that we have to exactly know the set of all possible recovery sets for each file and also have to optimally solve all the LP~\eqref{vertex_LP_primal}. Using the dual LP~\eqref{vertex_LP_dual}, we run into a similar problem since in order to list all the inequalities in~\eqref{vertex_LP_dual}, again we require to know the elements of all the recovery sets for each file, which becomes increasingly complex when the number of files $k$ increases. Therefore, determining the service rate region of a code is a challenging problem, and in general we have to be pleased with lower and upper bounds. Thus, characterizing the exact service rate region of some parametric classes of linear codes or deriving some bounds on the service rate of a code without knowing explicitly all recovery sets is of great significance, which we aim to address in this paper. Towards this goal, we introduce a novel geometric approach. Leveraging our geometric approach, we derive upper bounds on the service rates of the first order Reed-Muller codes and simplex codes.

\subsection{Geometric View on Linear Codes~\cite{tsfasman1995geometric,dodunekov1998codes,beutelspacher1998projective}}\label{subsec:n-multiset}

\begin{definition}
For a vector space $\mathcal{V}$ of dimension $v$ over $\mathbb{F}_q$, ordered by inclusion, the set of all ${\mathbb{F}_q}$-subspaces of ${\mathcal{V}}$ forms a finite modular geometric lattice with meet ${X\wedge Y=X\cap Y}$, join $X\vee Y=X+Y$, and rank function $X\mapsto\dim(X)$. This subspace lattice of $\mathcal{V}$ is known as the projective geometry of ${\mathcal{V}}$, denoted by ${\PG(\mathcal{V})}$.
\end{definition}

For a vector space $\mathcal{V}$ of dimension $v$ over $\mathbb{F}_q$, the $1$-dimensional subspaces of $\mathcal{V}$ are the points of $\PG(\mathcal{V})$, the $2$-dimensional subspaces of $\mathcal{V}$ are the lines of ${\PG(\mathcal{V})}$, and the ${v-1}$ dimensional subspaces of $\mathcal{V}$ are called the hyperplanes of $\PG(\mathcal{V})$. The projective geometry ${\PG(\mathcal{V})}$ is also denoted by $\PG(v-1,q)$, which is referred to as the ${v-1}$ dimensional projective space over ${\mathbb{F}_q}$. This notion makes sense because of the fact that, up to isomorphism, the projective geometry ${\PG(\mathcal{V})}$ only depends on the order $q$ of the base field and the (\emph{algebraic}) dimension $v$, which is justifying the notion ${\PG(v-1,q)}$ of (\emph{geometric}) dimension $v-1$ over $\mathbb{F}_q$. %\\[1ex]

Let $\mathcal{V}$ be a vector space of dimension $v$ over $\mathbb{F}_q$. The set of all $k$-dimensional subspaces of $\mathcal{V}$, referred to as \emph{$k$-subspaces}, will be denoted by $\gauss{\mathcal{V}}{k}{q}$. The cardinality of this set is given by the Gaussian binomial coefficient as follows

\[
\gauss{v}{k}{q} =
%\#\gauss{\mathcal{V}}{k}{q} =
\begin{cases}
	\frac{(q^v-1)(q^{v-1}-1)\cdots(q^{v-k+1}-1)}{(q^k-1)(q^{k-1}-1)\cdots(q-1)} & \text{if }0\leq k\leq v\text{;}\\
	0 & \text{otherwise.}
\end{cases}
\]\vspace{0.1cm}

A multiset is a modification of the concept of a set that, unlike a set, allows for multiple instances for each of its elements. The positive integer number of instances, given for each element is called the multiplicity of this element in the multiset. More formally, a multiset $\mathcal{S}$ on a base set $\mathcal{X}$ can be identified with its characteristic function $\chi_{\mathcal{S}} : \mathcal{X} \to \mathbb{N}$, mapping $x \in \mathcal{X}$ to the multiplicity of $x$ in $\mathcal{S}$. The \emph{cardinality} of $\mathcal{S}$ is $\#\mathcal{S} = \sum_{x\in X} \chi_{\mathcal{S}}(x)$. $\mathcal{S}$ is also called \emph{$\#\mathcal{S}$-multiset}.

\begin{definition}
Let $\mathcal{V}$ be a vector space of dimension $v$ over $\mathbb{F}_q$, $\mathcal{P}$ be a multiset of points $p$ in $\PG(\mathcal{V})$ with characteristic function $\chi_{\mathcal{P}} : \PG(\mathcal{V}) \to \mathbb{N}$, and $\mathcal{H}$ denotes a hyperplane in $\PG(\mathcal{V})$. The restricted multiset $\mathcal{P} \cap \mathcal{H}$ is defined via its characteristic function as follows

\[
	\chi_{\mathcal{P} \cap \mathcal{H}}(p) = 
	\begin{cases}
	\chi_{\mathcal{P}}(p) & \text{if }p\in\gauss{\mathcal{H}}{1}{q}\text{;} \\
	0 & \text{otherwise.}
	\end{cases}
\] 
Then $\#(\mathcal{P} \cap \mathcal{H}) = \sum_{p\in\gauss{\mathcal{H}}{1}{q}} \chi_{\mathcal{P}}(p)$.
\end{definition}

Let $\mathbf{G} \in \mathbb{F}^{k\times n}_q$ be the generator matrix of a linear $[n,k]_q$ code $\mathcal{C}$, a $k$-subspace of the $n$-dimensional vector space $\mathbb{F}_q^n$. Let ${\mathbf{g}_i \in \mathbb{F}_q^k}$, ${i\in [n]}$ be the $i$th column of $\mathbf{G}$. Suppose that none of the ${\mathbf{g}_i}$'s is $\mathbf{0}$. (The code $\mathcal{C}$ is said to be of full length.) Then each $\mathbf{g}_i$ determines a point in the projective space ${\PG(k-1,q)}$, and $\mathcal{G}:=\{\mathbf{g}_1,\mathbf{g}_2,\dots,\mathbf{g}_n\}$ is a set of $n$ points in ${\PG(k-1,q)}$ if the $\mathbf{g}_i$ happen to be pair-wise independent. When dependence occurs, $\mathcal{G}$ is interpreted as a multiset and each point is counted with the appropriate multiplicity. In general, $\mathcal{G}$ is called $n$-multiset induced by $\mathcal{C}$.

\begin{proposition}
Different generator matrices of a code yield
projectively equivalent codes. In other words, there exist a bijective correspondence between the equivalence classes of full-length q-ary linear codes and the projective equivalence classes of multisets in finite projective spaces.
\end{proposition}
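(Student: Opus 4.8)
The plan is to make the correspondence explicit and then verify it is a well-defined bijection. Fix a full-length linear $[n,k]_q$ code $\mathcal{C}$. Any two generator matrices $\mathbf{G},\mathbf{G}'$ of $\mathcal{C}$ satisfy $\mathbf{G}'=\mathbf{A}\mathbf{G}$ for a unique $\mathbf{A}\in\mathrm{GL}_k(\mathbb{F}_q)$, so their $j$th columns are related by $\mathbf{g}'_j=\mathbf{A}\mathbf{g}_j$. Since $\mathbf{A}$ is invertible, it induces a projective transformation of $\PG(k-1,q)$ (an element of $\mathrm{PGL}_k(\mathbb{F}_q)$), and this transformation carries the $n$-multiset $\mathcal{G}$ induced by $\mathbf{G}$ onto the $n$-multiset $\mathcal{G}'$ induced by $\mathbf{G}'$, preserving every multiplicity. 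Hence the projective equivalence class of the induced multiset depends only on $\mathcal{C}$, not on the chosen generator matrix — which is exactly the first sentence of the proposition. Moreover, passing to an equivalent code (permuting coordinates and rescaling the $j$th coordinate by a nonzero scalar $c_j$) does not alter the underlying multiset of projective points at all: rescaling a column fixes the point it represents, and a permutation only reorders columns, which is irrelevant for a multiset. So the assignment $[\mathcal{C}]\mapsto[\mathcal{G}]$ is well defined on equivalence classes of full-length codes.

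Next I would check the two directions. For surjectivity: given an $n$-multiset $\mathcal{P}$ of points of $\PG(k-1,q)$ whose points span the whole space (equivalently, $\mathcal{P}$ lies in no hyperplane — the natural full-dimensionality hypothesis ensuring the associated code has dimension exactly $k$), pick a representative vector in $\mathbb{F}_q^k$ for each point and list these vectors, with the prescribed multiplicities, as the columns of a matrix $\mathbf{G}\in\mathbb{F}_q^{k\times n}$. The spanning condition forces $\operatorname{rank}\mathbf{G}=k$, so $\mathbf{G}$ generates a full-length $[n,k]_q$ code whose induced multiset is precisely $\mathcal{P}$. For injectivity: if full-length codes $\mathcal{C}_1,\mathcal{C}_2$ with generator matrices $\mathbf{G}_1,\mathbf{G}_2$ induce projectively equivalent multisets, there is a permutation of the $n$ points together with an element of $\mathrm{PGL}_k(\mathbb{F}_q)$ taking one multiset to the other; lifting the projective transformation to some $\mathbf{A}\in\mathrm{GL}_k(\mathbb{F}_q)$ and collecting the per-column scalar ambiguities into a diagonal matrix shows that $\mathbf{G}_2$ is obtained from $\mathbf{A}\mathbf{G}_1$ by a column permutation and column rescaling, i.e. $\mathcal{C}_2$ is equivalent to $\mathcal{C}_1$.

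The step that needs the most care is the bookkeeping of which notion of equivalence is in force on each side and matching them. On the code side, coordinate permutations and nonzero coordinate rescalings must be declared equivalences — these are exactly the operations that become invisible when one passes to the multiset of columns viewed as projective points; if in addition one allows an entrywise field automorphism, then on the geometry side one must use $\mathrm{P\Gamma L}_k(\mathbb{F}_q)$-equivalence of multisets, and the fundamental theorem of projective geometry (valid for $k\ge 3$) guarantees the collineations of $\PG(k-1,q)$ are exactly the semilinear maps, so the two notions still coincide. A second, minor but essential point already implicit in ``full-length'' is the spanning hypothesis on the multiset, without which the inverse construction yields a code of dimension strictly less than $k$. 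I expect the genuine content to be fixing these conventions precisely; the underlying algebra is just the elementary observation that left multiplication $\mathbf{G}\mapsto\mathbf{A}\mathbf{G}$ on generator matrices corresponds columnwise to a projectivity of $\PG(k-1,q)$.
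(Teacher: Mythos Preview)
The paper does not actually prove this proposition: it is stated without proof in Section~\ref{subsec:n-multiset} as background imported from the cited references~\cite{tsfasman1995geometric,dodunekov1998codes,beutelspacher1998projective}. Your argument is correct and is essentially the standard proof one finds in those sources: the key observations are exactly that left multiplication by $\mathrm{GL}_k(\mathbb{F}_q)$ on generator matrices acts columnwise as a projectivity of $\PG(k-1,q)$, and that monomial equivalence on the code side (column permutations and nonzero rescalings) is invisible once columns are viewed as projective points. Your care in matching the equivalence notions on the two sides, and in flagging the spanning hypothesis needed for the inverse map to return a code of full dimension $k$, is appropriate and is precisely what makes the correspondence a genuine bijection rather than merely a well-defined map.
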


It should be noted that the importance of this correspondence lies in the fact that it relates the coding-theoretic properties of $\mathcal{C}$ to the geometric or the combinatorial properties of $\mathcal{G}$.

\begin{proposition}\label{Prop:mindis}
Let $\mathbf{G} \in \mathbb{F}^{k\times n}_q$ be the generator matrix of a linear $[n,k,d]_q$ code $\mathcal{C}$, and $\mathcal{G}$ be the $n$-multiset induced by code $\mathcal{C}$. The minimum distance $d$ of code $\mathcal{C}$ is given by
\begin{equation}
  \nonumber
  d=n-\max\card{(\mathcal{G}\cap\mathcal{H})},
\end{equation}
where $\mathcal{H}$ runs through all the hyperplanes of ${\PG(k-1,q)}$.
\end{proposition}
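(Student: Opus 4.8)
The plan is to convert the combinatorial maximum $\max_{\mathcal{H}}\card{(\mathcal{G}\cap\mathcal{H})}$ into a statement about the Hamming weights of the nonzero codewords of $\mathcal{C}$, exploiting the classical duality between nonzero vectors of $\mathbb{F}_q^k$ (up to scalar multiples) and hyperplanes of $\PG(k-1,q)$. First I would recall that, since $\mathbf{G}$ has rank $k$, the map $\mathbf{x}\mapsto\mathbf{x}^\top\mathbf{G}$ is a bijection from $\mathbb{F}_q^k$ onto $\mathcal{C}$; in particular every nonzero codeword is $\mathbf{c}=\mathbf{x}^\top\mathbf{G}$ for some $\mathbf{x}\neq\mathbf{0}$. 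The $i$-th coordinate of $\mathbf{c}$ equals $\mathbf{x}^\top\mathbf{g}_i$, and it vanishes exactly when the point spanned by $\mathbf{g}_i$ lies on the hyperplane $\mathcal{H}_{\mathbf{x}}:=\{\mathbf{y}\in\mathbb{F}_q^k:\mathbf{x}^\top\mathbf{y}=0\}$ of $\PG(k-1,q)$.

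Second, I would make the count precise for the induced multiset. Summing over the $n$ coordinates, and hence over the columns of $\mathbf{G}$ counted with multiplicity, the number of zero coordinates of $\mathbf{c}=\mathbf{x}^\top\mathbf{G}$ equals $\sum_{p\in\gauss{\mathcal{H}_{\mathbf{x}}}{1}{q}}\chi_{\mathcal{G}}(p)=\card{(\mathcal{G}\cap\mathcal{H}_{\mathbf{x}})}$, directly from the definition of the restricted multiset $\mathcal{G}\cap\mathcal{H}_{\mathbf{x}}$ given above. Therefore $\hweight(\mathbf{c})=n-\card{(\mathcal{G}\cap\mathcal{H}_{\mathbf{x}})}$ for every $\mathbf{x}\neq\mathbf{0}$.

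Third, I would note that $\mathbf{x}\mapsto\mathcal{H}_{\mathbf{x}}$ is onto the set of all hyperplanes of $\PG(k-1,q)$: every $(k-1)$-dimensional subspace of $\mathbb{F}_q^k$ is the kernel of some nonzero linear form $\mathbf{y}\mapsto\mathbf{x}^\top\mathbf{y}$, and rescaling $\mathbf{x}$ by a nonzero scalar changes neither $\mathcal{H}_{\mathbf{x}}$ nor $\hweight(\mathbf{x}^\top\mathbf{G})$. Combining this with the second step and the defining property $d=\min\{\hweight(\mathbf{c}):\mathbf{c}\in\mathcal{C},\,\mathbf{c}\neq\mathbf{0}\}$ yields
\[
 d=\min_{\mathbf{x}\neq\mathbf{0}}\bigl(n-\card{(\mathcal{G}\cap\mathcal{H}_{\mathbf{x}})}\bigr)=n-\max_{\mathcal{H}}\card{(\mathcal{G}\cap\mathcal{H})},
\]
which is the assertion.

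I expect the only delicate point to be the bookkeeping with multiplicities: when $\mathbf{G}$ has repeated or linearly dependent columns, $\mathcal{G}$ is a genuine multiset, and one must check that the coordinate-wise count of zeros of $\mathbf{x}^\top\mathbf{G}$ matches the multiset intersection number $\card{(\mathcal{G}\cap\mathcal{H}_{\mathbf{x}})}$ — points on $\mathcal{H}_{\mathbf{x}}$ weighted by $\chi_{\mathcal{G}}$ — rather than the number of \emph{distinct} such points. This is immediate from the definition of $\chi_{\mathcal{P}\cap\mathcal{H}}$, but it is exactly where an off-by-multiplicity slip could creep in; the remaining steps are a routine unwinding of definitions, and the full-length hypothesis on $\mathcal{C}$ (no zero column) is used only to guarantee that $\mathcal{G}$ is a well-defined multiset of points.
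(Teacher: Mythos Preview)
Your proof is correct and follows essentially the same route as the paper: identify the Hamming weight of $\mathbf{x}^\top\mathbf{G}$ with $n-\card{(\mathcal{G}\cap\mathcal{H}_{\mathbf{x}})}$ via the duality between nonzero vectors and hyperplanes, then minimize over $\mathbf{x}\neq\mathbf{0}$ to obtain $d$. You are, if anything, slightly more explicit than the paper about the surjectivity of $\mathbf{x}\mapsto\mathcal{H}_{\mathbf{x}}$ and the multiset bookkeeping, but the argument is the same.
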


\begin{proof}
For an arbitrary non-zero row vector $\vek{a}=[a_1,\cdots,a_k]$ of dimension $k$, the Hamming weight of codeword $\vek{a}\mathbf{G}\in \mathcal{C}$ is given by 
\begin{equation}
  \nonumber
  \hweight(\vek{a}\mathbf{G})=n-\card{\{j\in
  [n];\vek{a}\vek{g}_j=0\}}
  =n-\card{(\mathcal{G}\cap\mathcal{A})},
\end{equation}
where $\mathcal{A}$ is a hyperplane in ${\PG(k-1,q)}$ with equation $a_1x_1+\dots+a_kx_k=0$. Thus, the codeword with minimum Hamming weight is resulted from a hyperplane $\mathcal{H}$ in ${\PG(k-1,q)}$ with maximum ${\card{(\mathcal{G}\cap\mathcal{H})}}$. The proof is completed considering the fact that the minimum distance of a code is equal to the minimum Hamming weight of its nonzero codewords. 
\end{proof}

\begin{example}
Consider the $k$-dimensional simplex code $\mathcal{C}$ over $\F_q$. In ${\PG(k-1,q)}$, the multiset $\mathcal{G}$ induced by code $\mathcal{C}$ has ${\gauss{k}{1}{q}}$ points, and all hyperplanes contain ${\gauss{k-1}{1}{q}}$ points. Thus, as an immediate consequence of Proposition~\ref{Prop:mindis} and its proof, every non-zero codeword of the corresponding linear code has a Hamming weight of $q^{k-1}$, which indicates that the minimum distance of code $\mathcal{C}$ is $q^{k-1}$. Let $\mathcal{H}$ be an arbitrary hyperplane in ${\PG(k-1,q)}$ and $\mathcal{P}$ be the set of all $q^{k-1}$ points of $\mathbb{F}_q^k$ that are not contained in $\mathcal{H}$. The corresponding code of $\mathcal{P}$ is known as a $k$-dimensional first order Reed-Muller code or as an affine $k$-dimensional simplex code. 
\end{example}

\subsection{First Order Reed-Muller Codes\cite{assmus1994designs,arikan2009channel,muller1954application,reed1953class}}\label{subsec:RM}
In this paper, we consider binary first order Reed-Muller codes $\text{RM}_2(1,k-1)$ with the integer parameter $k\ge 2$. It is known that $\text{RM}_2(1,k-1)$ is a linear $[2^{k-1},k,2^{k-2}]_2$ code. For a given $k$, one way of obtaining this code is to evaluate all multilinear polynomials with the binary coefficients, ${k-1}$ variables and the total degree of one on the elements of $\mathbb{F}_2^{k-1}$. The encoding polynomial for $\text{RM}_2(1,k-1)$ can be written as $c_1 +c_2\cdot Z_1+c_3\cdot Z_2+\dots +c_{k}\cdot Z_{k-1}$ where $Z_1,\dots,Z_{k-1}$ are the $k-1$ variables, and $c_1,\dots,c_k$ are the binary coefficients of this polynomial. Indeed, the data symbols $f_1,\dots,f_k$ are used as the coefficients of the encoding polynomial, and the codeword symbols are obtained by evaluating the encoding polynomial on all vectors ${(Z_1,\dots,Z_{k-1})\in \mathbb{F}_2^{k-1}}$.

Another way of describing $k$-dimensional binary first order Reed-Muller codes $\text{RM}_2(1,k-1)$ is based on the generator matrix which can be constructed as follows. Let write the set of all $(k-1)$-dimensional binary vectors as $\mathcal{X}=\mathbb{F}_2^{k-1}=\{\vek{x}_1,\dots,\vek{x}_n\}$ where $n=2^{k-1}$ and for $i \in [n]$, $\vek{x}_i=(x_{i_{k-1}},\dots,x_{i_{1}})$ with $x_{i_j} \in \F_2$, $j \in [k-1]$. For any ${\mathcal{A} \subseteq \mathcal{X}}$, define the indicator vector $\mathbb{I}_\mathcal{A} \in \mathbb{F}_2^{k-1}$ as follows,
\[
	(\mathbb{I}_\mathcal{A})_i=
	\begin{cases}
	1 & \text{if }\vek{x}_i\in \mathcal{A}\text{;} \\
	0 & \text{otherwise.}
	\end{cases}
\]\vspace{0.05cm}

For the $k$ rows of the generator matrix of $\text{RM}_2(1,k-1)$, define $k$ row vectors of length $2^{k-1}$ as follows, $\vek{r}_0=(1,\dots,1)$ and ${\vek{r}_j=\mathbb{I}_{\mathcal{H}_j}}$, ${j \in [k-1]}$, where ${\mathcal{H}_j=\{\vek{x}_i \in \mathcal{X} \mid x_{i_j}=0\}}$. The set $\{\vek{r}_{k-1},\dots,\vek{r}_1,\vek{r}_0\}$ defines the rows of a non-systematic generator matrix of the $\text{RM}_2(1,{k-1})$. Note that for a systematic generator matrix of the $\text{RM}_2(1,{k-1})$, the set of rows $\{\vek{r}_{k-1},\dots,\vek{r}_1,\sum_{i=0}^{k-1}\vek{r}_i\}$ can be considered.  
\begin{example}
Consider $\text{RM}_2(1,3)$ which is an $[8,4,4]_2$ code.  We first define the set $\mathcal{X}$ as follows
\[{\mathcal{X}=\mathbb{F}_2^{3}=\{(0,0,0),(0,0,1),\dots,(1,1,1)\}}=\{\vek{x}_1,\dots,\vek{x}_8\}\]
It then follows that ${\mathcal{H}_3=\{\vek{x}_1,\vek{x}_2,\vek{x}_3,\vek{x}_4\}}$ that gives ${\vek{r}_3=(1,1,1,1,0,0,0,0)}$, and $\mathcal{H}_2=\{\vek{x}_1,\vek{x}_2,\vek{x}_5,\vek{x}_6\}$ which gives $\vek{r}_2=(1,1,0,0,1,1,0,0)$, and ${\mathcal{H}_1=\{\vek{x}_1,\vek{x}_3,\vek{x}_5,\vek{x}_7\}}$ which results ${\vek{r}_1=(1,0,1,0,1,0,1,0)}$. Let ${\vek{r}_0}$ be the all-one row vector of dimension eight. The set ${\{\vek{r}_{3},\vek{r}_{2},\vek{r}_1,\vek{r}_0\}}$ defines the rows of a non-systematic generator matrix of the ${\text{RM}_2(1,3)}$ as follows 

\[
     \mathbf{G}= \begin{bmatrix}
      1 & 1 & 1 & 1 & 0 & 0 & 0 & 0 \cr
      1 & 1 & 0 & 0 & 1 & 1 & 0 & 0 \cr
      1 & 0 & 1 & 0 & 1 & 0 & 1 & 0 \cr
      1 & 1 & 1 & 1 & 1 & 1 & 1 & 1 \cr
    \end{bmatrix}
   \]\
   
Also, ${\sum_{i=0}^{3}\vek{r}_i=(0,1,1,0,1,0,0,1)}$, and a systematic generator matrix of the $\text{RM}_2(1,3)$ is given by

\[
     \mathbf{G}= \begin{bmatrix}
      1 & 1 & 1 & 1 & 0 & 0 & 0 & 0 \cr
      1 & 1 & 0 & 0 & 1 & 1 & 0 & 0 \cr
      1 & 0 & 1 & 0 & 1 & 0 & 1 & 0 \cr
      0 & 1 & 1 & 0 & 1 & 0 & 0 & 1 \cr
    \end{bmatrix}
   \]\
\end{example}

\section{Geometric View on Service Rate of Codes}

In this section, we use the geometric description of linear codes. For a linear code $\mathcal{C}$ with generator matrix ${\mathbf{G} \in \mathbb{F}^{k\times n}_q}$, we consider the $n$-multiset $\mathcal{G}$ induced by $\mathcal{C}$ in ${\PG(k-1,q)}$ with the characteristic function $\chi_{\mathcal{G}}$ as defined in the section~\ref{subsec:n-multiset}. Thus, each point ${p\in\PG(k-1,q)}$ has a certain multiplicity ${\chi_{\mathcal{G}}(p)\in\mathbb{N}}$. In this language, the reduced recovery sets are subsets of $\mathcal{G}$ such that each point can be taken once in a reduced recovery set. Also, the service rate of each point ${p}$, denoted by $\mu(p)$, can be defined as the sum of the service rates of the nodes (columns of $\mathbf{G}$) corresponding to the point $p$. Based on this definition, $\mu(p)=\sum_{l\in \mathcal{L}_p}\mu_l$ where $\mathcal{L}_p$ is the set of nodes that correspond to the same point $p \in \PG(k-1,q)$. Since $\#\mathcal{L}_p=\chi_{\mathcal{G}}(p)$, if all nodes in the set $\mathcal{L}_p$ have the same service rate, say $\mu_p$, then we have $\mu(p)=\chi_{\mathcal{G}}(p)\cdot\mu_p$. 

\begin{lemma}\label{lemma_hyperplane_constraint}
Let ${\mathbf{G} \in \mathbb{F}^{k\times n}_q}$ be the generator matrix 
of an $[n,k]_q$ code $\mathcal{C}$, and $\mathcal{G}$ be the $n$-multiset induced by code $\mathcal{C}$ with service rate $\mu(p)$ of each point ${p\in\PG(k-1,q)}$. If for some ${i \in [k]}$, ${s\cdot \vek{e}_i\in\mathcal{S}(\mathbf{G},\boldsymbol{\mu})}$ and a hyperplane $\mathcal{H}$ of $\PG(k-1,q)$ is not containing $\vek{e}_i$, then we have
\[
  s \leq \sum_{p \in {\PG(k-1,q) \setminus \mathcal{H}}} \mu(p).
\]   
\end{lemma}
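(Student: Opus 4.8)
The plan is to exploit the defining property of the projective picture: a reduced recovery set $R$ for file $f_i$ corresponds to a set of columns of $\mathbf{G}$ whose span contains $\mathbf{e}_i$, hence whose corresponding points in $\PG(k-1,q)$ span a projective subspace $\langle R\rangle$ with $\mathbf{e}_i\in\langle R\rangle$. Since the hyperplane $\mathcal{H}$ does \emph{not} contain $\mathbf{e}_i$, the subspace $\langle R\rangle$ cannot be contained in $\mathcal{H}$; therefore every recovery set $R$ for $f_i$ must use at least one point $p\in\PG(k-1,q)\setminus\mathcal{H}$. This is the geometric heart of the argument, and it is the step I expect to require the most care: one must argue cleanly that if all points of $R$ lay in $\mathcal{H}$ then $\langle R\rangle\subseteq\mathcal{H}$ and hence $\mathbf{e}_i\in\mathcal{H}$, contradicting the hypothesis. (Working at the level of the affine complement $\PG(k-1,q)\setminus\mathcal{H}$ and using $\mathbf{h}^\top\mathbf{g}_j$ as the defining linear form makes this precise.)

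Next I would set up the flow/scheduling side. By hypothesis $s\cdot\mathbf{e}_i\in\mathcal{S}(\mathbf{G},\boldsymbol{\mu})$, so there exist feasible split rates $\lambda_{i,1},\dots,\lambda_{i,t_i}\ge 0$ with $\sum_{j=1}^{t_i}\lambda_{i,j}=s$, satisfying the capacity constraints \eqref{condition_capacity} (all other $\lambda_{i',j}$ being zero). For each recovery set $R_{i,j}$ fix, by the previous paragraph, a witness point $p(i,j)\in\PG(k-1,q)\setminus\mathcal{H}$ used by $R_{i,j}$, and pick one node $l(i,j)$ among the columns of $\mathbf{G}$ mapping to $p(i,j)$; note $l(i,j)\in R_{i,j}$.

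Then I would sum the capacity constraints over exactly the nodes lying outside $\mathcal{H}$. Summing \eqref{condition_capacity} over all $l$ whose column maps to a point of $\PG(k-1,q)\setminus\mathcal{H}$ gives
\[
  \sum_{p\in\PG(k-1,q)\setminus\mathcal{H}}\mu(p)\;=\;\sum_{l:\,p_l\notin\mathcal{H}}\mu_l\;\ge\;\sum_{l:\,p_l\notin\mathcal{H}}\ \sum_{\substack{j\in[t_i]\\ l\in R_{i,j}}}\lambda_{i,j}\;\ge\;\sum_{j=1}^{t_i}\lambda_{i,j}\;=\;s,
\]
where the last inequality holds because each $j\in[t_i]$ contributes $\lambda_{i,j}$ at least once on the right-hand side, namely through the node $l(i,j)$, which lies outside $\mathcal{H}$ and belongs to $R_{i,j}$; all terms are nonnegative, so discarding the remaining contributions only decreases the sum. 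This chain of inequalities is exactly the claimed bound, completing the proof.

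The only subtlety beyond the geometric observation is bookkeeping: making sure that in the double sum each recovery set of $f_i$ is charged to a node outside $\mathcal{H}$, and that distinct recovery sets may legitimately share such a node (the inequality direction is unaffected since we are lower-bounding). I would state the geometric fact as the key claim up front, prove it in one or two lines via the linear-form description of $\mathcal{H}$, and then let the summation argument run as above.
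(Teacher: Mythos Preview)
Your proposal is correct and follows essentially the same approach as the paper: the geometric observation that every recovery set for $f_i$ must contain at least one point outside $\mathcal{H}$ (since $\langle R\rangle\subseteq\mathcal{H}$ would force $\mathbf{e}_i\in\mathcal{H}$), followed by charging each recovery set's rate $\lambda_{i,j}$ to such a point and summing the capacity constraints over nodes outside $\mathcal{H}$. The paper's proof is a two-sentence sketch of exactly this idea (``replacing each recovery set in $\mathcal{R}_i$ by an arbitrary contained point outside of hyperplane~$\mathcal{H}$''); your version simply fills in the bookkeeping with the explicit inequality chain, which is a welcome level of detail.
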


\begin{Corollary}\label{cor_min_distance}
Let $\mathbf{G} \in \mathbb{F}^{k\times n}_q$ be the generator matrix of a linear $[n,k,d]_q$ code $\mathcal{C}$ with service rate ${\mu_l=1}$ of all nodes $l \in [n]$, and $\mathcal{G}$ be the $n$-multiset induced by code $\mathcal{C}$. If for all $i \in [k]$, $s\cdot \vek{e}_i\in\mathcal{S}(\mathbf{G},\boldsymbol{\mu})$, then the minimum distance $d$ of code $\mathcal{C}$ is at least $\lceil s\rceil$. 
\end{Corollary}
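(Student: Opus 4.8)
The plan is to chain together the hyperplane bound of Lemma~\ref{lemma_hyperplane_constraint} with the geometric description of the minimum distance from Proposition~\ref{Prop:mindis}. First I would specialize the right-hand side of Lemma~\ref{lemma_hyperplane_constraint} to the situation at hand, where $\mu_l=1$ for every node $l\in[n]$. In that case the service rate of a point $p\in\PG(k-1,q)$ is exactly its multiplicity, $\mu(p)=\chi_{\mathcal{G}}(p)$, so for an \emph{arbitrary} hyperplane $\mathcal{H}$ of $\PG(k-1,q)$ one has
\[
\sum_{p\in\PG(k-1,q)\setminus\mathcal{H}}\mu(p)
=\sum_{p\in\PG(k-1,q)}\chi_{\mathcal{G}}(p)-\sum_{p\in\gauss{\mathcal{H}}{1}{q}}\chi_{\mathcal{G}}(p)
=n-\card{(\mathcal{G}\cap\mathcal{H})}.
\]

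Next I would choose the hyperplane cleverly. Let $\mathcal{H}^\star$ be a hyperplane attaining $\card{(\mathcal{G}\cap\mathcal{H}^\star)}=\max_{\mathcal{H}}\card{(\mathcal{G}\cap\mathcal{H})}$; by Proposition~\ref{Prop:mindis} this maximum equals $n-d$. The one geometric fact needed is that $\mathcal{H}^\star$, being a $(k-1)$-dimensional subspace of $\mathbb{F}_q^k$, cannot contain all the unit vectors $\vek{e}_1,\dots,\vek{e}_k$, since those span the full $k$-dimensional space; hence there exists an index $i\in[k]$ with $\vek{e}_i\notin\mathcal{H}^\star$. Because $s\cdot\vek{e}_i\in\mathcal{S}(\mathbf{G},\boldsymbol{\mu})$ holds by hypothesis for this (indeed every) $i$, Lemma~\ref{lemma_hyperplane_constraint} applied to $\mathcal{H}^\star$ yields
\[
s\ \le\ \sum_{p\in\PG(k-1,q)\setminus\mathcal{H}^\star}\mu(p)\ =\ n-\card{(\mathcal{G}\cap\mathcal{H}^\star)}\ =\ d .
\]
Since $d$ is a positive integer, $s\le d$ forces $\lceil s\rceil\le d$, which is exactly the claim.

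I do not expect any genuine obstacle here once Lemma~\ref{lemma_hyperplane_constraint} is in hand; the proof is essentially a two-line combination. The only point that deserves a careful word is that a \emph{single} hyperplane has to play two roles simultaneously — it must be distance-optimal (so that $n-\card{(\mathcal{G}\cap\mathcal{H}^\star)}=d$) and at the same time miss at least one standard basis vector (so that Lemma~\ref{lemma_hyperplane_constraint} applies) — and both properties are guaranteed, the first by taking $\mathcal{H}^\star$ to be a maximizer and the second by the dimension count $\dim\mathcal{H}^\star=k-1<k$. One might also remark that the hypothesis ``$s\cdot\vek{e}_i\in\mathcal{S}(\mathbf{G},\boldsymbol\mu)$ for \emph{all} $i$'' is stronger than needed: the argument only ever uses the index $i$ singled out by $\mathcal{H}^\star$.
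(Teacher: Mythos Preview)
Your proof is correct and follows essentially the same route as the paper: both invoke Lemma~\ref{lemma_hyperplane_constraint} after observing that any hyperplane misses at least one unit vector, specialize $\mu(p)=\chi_{\mathcal{G}}(p)$ to rewrite the right-hand side as $n-\card{(\mathcal{G}\cap\mathcal{H})}$, and then appeal to Proposition~\ref{Prop:mindis} together with the integrality of $d$. The only cosmetic difference is that the paper phrases the bound as holding for \emph{every} hyperplane and then minimizes, whereas you pick a maximizing hyperplane $\mathcal{H}^\star$ up front; these are equivalent.
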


%It is worth noting that inequality~(\ref{ie_hyperplane_constraint}) is well-known in the context of PIR or batch codes where it was used to show that an $s$-server PIR code has minimum Hamming distance at least $s$ \cite{skachek2018batch}.

\begin{Corollary}\label{cor_hyperplane_constraint}
Let ${\mathbf{G} \in \mathbb{F}^{k\times n}_q}$ be the generator matrix of a linear $[n,k]_q$ code $\mathcal{C}$, and $\mathcal{G}$ be the $n$-multiset induced by code $\mathcal{C}$ with service rate $\mu(p)$ of each point ${p\in\PG(k-1,q)}$. Let $\mathcal{I}\subseteq [k]$. If for all $i\in\mathcal{I}$, there exist ${s_i\in\mathbb{R}_{\ge 0}}$ such that ${\sum_{i\in\mathcal{I}} s_i\cdot \vek{e}_i\in\mathcal{S}(\mathbf{G},\boldsymbol{\mu})}$ and a hyperplane $\mathcal{H}$ of $\PG(k-1,q)$ which is not containing $\vek{e}_i$ for all $i\in\mathcal{I}$, then we have 
\[
  s \leq \sum_{p \in {\PG(k-1,q) \setminus \mathcal{H}}} \mu(p).
\] where $s=\sum_{i\in\mathcal{I}} s_i$.
\end{Corollary}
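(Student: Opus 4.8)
The plan is to deduce Corollary~\ref{cor_hyperplane_constraint} directly from Lemma~\ref{lemma_hyperplane_constraint} by reducing the multi-coordinate case to the single-coordinate case already established. Write $\boldsymbol{\lambda}=\sum_{i\in\mathcal{I}} s_i\cdot\vek{e}_i\in\mathcal{S}(\mathbf{G},\boldsymbol{\mu})$ and set $s=\sum_{i\in\mathcal{I}} s_i$. The key observation is that we do not need $\mathcal{H}$ to avoid a single $\vek{e}_i$; since $\mathcal{H}$ avoids \emph{every} $\vek{e}_i$ with $i\in\mathcal{I}$, the whole support vector $\boldsymbol{\lambda}$ is "transversal" to $\mathcal{H}$ in exactly the sense Lemma~\ref{lemma_hyperplane_constraint} exploits for a coordinate axis.

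First I would recall the argument behind Lemma~\ref{lemma_hyperplane_constraint}, recast in a coordinate-free way: a hyperplane $\mathcal{H}$ of $\PG(k-1,q)$ is the kernel of some nonzero functional $\vek{a}^\top\vek{x}=0$, i.e.\ $\mathcal{H}=\{\vek{x}: \vek{a}^\top\vek{x}=0\}$. A reduced recovery set $R_{i,j}$ for $f_i$ satisfies $\sum_{l\in R_{i,j}}\alpha_l\vek{g}_l=\vek{e}_i$ for suitable nonzero $\alpha_l$; applying $\vek{a}^\top$ gives $\sum_{l\in R_{i,j}}\alpha_l(\vek{a}^\top\vek{g}_l)=a_i$. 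If $\vek{e}_i\notin\mathcal{H}$, then $a_i\neq 0$, so the recovery set $R_{i,j}$ must contain at least one node $l$ with $\vek{g}_l\notin\mathcal{H}$, i.e.\ at least one point outside $\mathcal{H}$.

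The second step combines the flow constraints. Fix any feasible assignment $\{\lambda_{i,j}\}$ realizing $\boldsymbol{\lambda}$. For each $i\in\mathcal{I}$, since every recovery set $R_{i,j}$ used for $f_i$ meets $\PG(k-1,q)\setminus\mathcal{H}$ in at least one point, summing the capacity constraints~\eqref{condition_capacity} over all nodes $l$ lying on points outside $\mathcal{H}$ gives
\[
 \sum_{p\in\PG(k-1,q)\setminus\mathcal{H}}\mu(p)\;\ge\;\sum_{i\in\mathcal{I}}\sum_{j\in[t_i]}\lambda_{i,j}\;=\;\sum_{i\in\mathcal{I}}\lambda_i\;=\;\sum_{i\in\mathcal{I}}s_i\;=\;s,
\]
where the first inequality holds because each flow value $\lambda_{i,j}$ is counted at least once on the left (it passes through at least one node on a point outside $\mathcal{H}$), and non-negativity~\eqref{eq:pos} ensures no cancellation; the first equality is the demand constraint~\eqref{condition_demand}. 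Note that flows $\lambda_{i,j}$ for $i\notin\mathcal{I}$ only add non-negative terms to the left-hand side and may be dropped, and that for $i\notin\mathcal{I}$ we have $\lambda_i=0$ so those demands are vacuous. This is exactly the claimed bound.

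The main obstacle—really the only subtlety—is the bookkeeping in the double-counting step: one must argue carefully that summing the node-capacity constraints over the node set $\mathcal{L}:=\{l\in[n]:\vek{g}_l\notin\mathcal{H}\}$ yields $\sum_{l\in\mathcal{L}}\mu_l=\sum_{p\in\PG(k-1,q)\setminus\mathcal{H}}\mu(p)$ on the right and a sum that dominates $\sum_{i\in\mathcal{I}}\sum_{j}\lambda_{i,j}$ on the left, using that each relevant recovery set contributes its flow to \emph{at least} one term of the sum. This is where one invokes the structure of $\mu(p)$ as $\sum_{l\in\mathcal{L}_p}\mu_l$ from the paragraph preceding Lemma~\ref{lemma_hyperplane_constraint}. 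Everything else is a verbatim extension of the proof of Lemma~\ref{lemma_hyperplane_constraint}, so the cleanest write-up may simply be: "Apply the argument of Lemma~\ref{lemma_hyperplane_constraint} simultaneously to all $i\in\mathcal{I}$, using that $\mathcal{H}$ avoids every $\vek{e}_i$, $i\in\mathcal{I}$," followed by the displayed chain above.
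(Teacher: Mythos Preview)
Your argument is correct and in fact tighter than the paper's own write-up. The paper proceeds differently: it first invokes Lemma~\ref{lem:convexity} to pass from $\sum_{i\in\mathcal{I}} s_i\vek{e}_i\in\mathcal{S}(\mathbf{G},\boldsymbol{\mu})$ to $s_i\vek{e}_i\in\mathcal{S}(\mathbf{G},\boldsymbol{\mu})$ for each $i\in\mathcal{I}$ individually, applies the \emph{statement} of Lemma~\ref{lemma_hyperplane_constraint} separately to each coordinate to obtain $s_i\le M$ with $M:=\sum_{p\notin\mathcal{H}}\mu(p)$, and then ``sums'' these to conclude $\sum_{i\in\mathcal{I}} s_i\le M$. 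As written, that last step does not follow: summing $s_i\le M$ over $i$ only yields $\sum_i s_i\le |\mathcal{I}|\cdot M$. What is really needed is the \emph{argument} of Lemma~\ref{lemma_hyperplane_constraint} carried out for all $i\in\mathcal{I}$ simultaneously against a single feasible assignment for the combined demand, which is exactly what you do: by summing the capacity constraints over the node set $\{l:\vek{g}_l\notin\mathcal{H}\}$ you make all the flows $\lambda_{i,j}$, across every $i\in\mathcal{I}$, compete for the same pool of capacity $M$, and the fact that each relevant recovery set meets this node set in at least one element gives $\sum_{i\in\mathcal{I}}\sum_j\lambda_{i,j}\le M$ directly. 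So your route is not just different but actually closes a gap in the paper's presentation; the only cosmetic point is that your appeal to ``convexity'' is really monotonicity (drop the flows for $i\notin\mathcal{I}$), which you handle correctly in the displayed chain anyway.
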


It should be noted that Corollary~\ref{cor_hyperplane_constraint} enables us to derive upper bounds on the service rate of the first order Reed-Muller codes and simplex codes. In what follows, without loss of generality, we assume that the service rate of all servers in the coded storage system is $1$, i.e., $\mu_l=1$ for all $l \in [n]$. Thus, by this assumption, the service rate region of a code only depends on the generator matrix $\mathbf{G}$ of the code and can be denoted by $\mathcal{S}(\mathbf{G})$.

\section{Service Rate Region of Simplex Codes}

In this section, by leveraging a novel geometric approach, we characterize the service rate region of the binary simplex codes which are special rate-optimal subclass of availability codes that are known as an important family of distributed storage codes. As we will show, the determined service rate region coincides with the region derived in~\cite[Theorem 1]{aktacs2017service}.

\begin{theorem}\label{thm_service_region_simplex_code}
For each integer ${k\ge 1}$, the service rate region of the $k$-dimensional binary simplex code $\mathcal{C}$, which is a linear ${[2^k-1,k,2^{k-1}]_2}$ code with generator matrix $\mathbf{G}$ is given by \[\mathcal{S}(\mathbf{G})=\left\{\boldsymbol{\lambda}\in\mathbb{R}^k_{\ge 0}\,:\, \sum_{i=1}^k \lambda_i\le 2^{k-1}\right\}.\]
\end{theorem}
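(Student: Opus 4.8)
The plan is to prove the two inclusions separately, using the geometric machinery of Section~III. Throughout, $\mathcal{G}$ is the $n$-multiset induced by the $k$-dimensional binary simplex code: it consists of every one of the $2^k-1$ points of $\PG(k-1,2)$, each with multiplicity $1$, so $\mu(p)=1$ for every point $p$.

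For the inclusion ``$\subseteq$'' I would invoke Corollary~\ref{cor_hyperplane_constraint} with $\mathcal{I}=[k]$. Any $\boldsymbol{\lambda}\in\mathcal{S}(\mathbf{G})$ can be written as $\sum_{i\in[k]}\lambda_i\vek{e}_i$ with $\lambda_i\ge 0$, so it suffices to exhibit one hyperplane $\mathcal{H}$ of $\PG(k-1,2)$ avoiding all the unit vectors $\vek{e}_1,\dots,\vek{e}_k$. The hyperplane with equation $x_1+\dots+x_k=0$ does the job, since each $\vek{e}_i$ has odd coordinate sum. Its complement $\PG(k-1,2)\setminus\mathcal{H}$ is exactly the set of odd-weight vectors of $\mathbb{F}_2^k$, of which there are $2^{k-1}$, each contributing $\mu(p)=1$. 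Corollary~\ref{cor_hyperplane_constraint} then yields $\sum_{i=1}^k\lambda_i\le 2^{k-1}$ for every $\boldsymbol{\lambda}\in\mathcal{S}(\mathbf{G})$.

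For the reverse inclusion I would first show $2^{k-1}\vek{e}_i\in\mathcal{S}(\mathbf{G})$ for every $i\in[k]$, and then pass to convexity. Fix $i$. Partition the $2^k-2$ points of $\PG(k-1,2)$ distinct from $\vek{e}_i$ into the $2^{k-1}-1$ pairs $\{\vek{a},\vek{a}+\vek{e}_i\}$ with $\vek{a}\notin\{\mathbf{0},\vek{e}_i\}$; each such pair is a recovery set for $f_i$ because $\vek{a}+(\vek{a}+\vek{e}_i)=\vek{e}_i$, and together with the singleton recovery set $\{\vek{e}_i\}$ these form $2^{k-1}$ pairwise disjoint recovery sets for $f_i$ whose union is the whole point set. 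Assigning rate $1$ to each of them, every server (point) carries total load at most $1=\mu_l$, and the total demand served is $2^{k-1}$, so $2^{k-1}\vek{e}_i\in\mathcal{S}(\mathbf{G})$. Since $\mathbf{0}\in\mathcal{S}(\mathbf{G})$ as well, Lemma~\ref{lem:convexity} and Proposition~\ref{prop:convhull} give $\conv\{\mathbf{0},2^{k-1}\vek{e}_1,\dots,2^{k-1}\vek{e}_k\}\subseteq\mathcal{S}(\mathbf{G})$, and a direct computation shows that this convex hull equals $\{\boldsymbol{\lambda}\in\mathbb{R}^k_{\ge 0}:\sum_{i=1}^k\lambda_i\le 2^{k-1}\}$. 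Combined with the first inclusion, this proves equality.

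The only genuinely code-specific step is the explicit partition of the point set into $2^{k-1}$ disjoint recovery sets for a fixed file; everything else is either a direct application of Corollary~\ref{cor_hyperplane_constraint} or elementary convexity, so I expect no real obstacle. The main thing to watch is that the odd-weight count for the upper bound and the pairing count for the lower bound both come out to exactly $2^{k-1}$, so that the two bounds meet.
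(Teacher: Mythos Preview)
Your proposal is correct and follows essentially the same route as the paper: the upper bound via Corollary~\ref{cor_hyperplane_constraint} applied to the hyperplane $\sum_i x_i=0$, and the lower bound via the partition of the point set into $\{\vek{e}_i\}$ together with the $2^{k-1}-1$ pairs $\{\vek{a},\vek{a}+\vek{e}_i\}$, followed by convexity. The arguments and the key counts match exactly.
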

\begin{proof}
Note that the simplex code is projective. Since the projective space $\PG(k-1,2)$ contains exactly $2^k-1$ points, the generator matrix $\mathbf{G}$ consists of all non-zero vectors of $\mathbb{F}_2^k$. (Up to column permutations the generator matrix is unique.) Given an arbitrary $i \in [k]$, we partition the columns of $\mathbf{G}$ into $\vek{e}_i$ and $\left\{\vek{x},\vek{x}+\vek{e}_i\right\}$ for all $2^{k-1}-1$ non-zero vectors $\vek{x}\in\mathbb{F}_2^k$ with $i$th coordinate being equal to zero. Thus, for all ${i \in [k]}$, ${2^{k-1}\cdot \vek{e}_i\in\mathcal{S}(\mathbf{G})}$. Let $\mathbf{v}_i={2^{k-1}\cdot \vek{e}_i}$ for $i \in [k]$. Since ${\mathcal{J}=\{\mathbf{0},\mathbf{v}_1,\mathbf{v}_2,\dots,\mathbf{v}_k\} \subseteq \mathcal{S}(\mathbf{G})}$, based on Lemma~\ref{lem:convexity} and Proposition~\ref{prop:convhull}, the $\conv(\mathcal{J})$ is contained in $\mathcal{S}(\mathbf{G})$, i.e.,
\[\mathcal{S}(\mathbf{G})\supseteq \left\{\boldsymbol{\lambda}\in\mathbb{R}^k_{\ge 0}\,:\, \sum_{i=1}^k \lambda_i\le 2^{k-1}\right\}\] 
For the other direction, we consider the hyperplane $\mathcal{H}$ given by $\sum_{i=1}^k x_i=0$, which does not contain any unit vector $\vek{e}_i$. Thus, for any demand vector $\boldsymbol{\lambda}=(\lambda_1,\dots,\lambda_k)$ in the service rate region, the  Corollary~\ref{cor_hyperplane_constraint} results in $\sum_{i=1}^k \lambda_i\le 2^{k-1}$. The reason is that half of the vectors in $\mathbb{F}_2^k$ which are the columns of $\mathbf{G}$ and so the elements of  $\mathcal{G}$, are not contained in  $\mathcal{H}$.
\end{proof}

\section{Service Rate Region of Reed-Muller Codes}
This section seeks to characterize the service rate region of the $\text{RM}_2(1, k-1)$ code with a non-systematic and a systematic generator matrix $\mathbf{G}$ constructed as described in section~\ref{subsec:RM}. 

\subsection{Non-Systematic First Order Reed-Muller Codes}

\begin{theorem}
For each integer $k\ge 2$, the service rate region of first order Reed-Muller code ${\text{RM}_2(1, k-1)}$ (or binary affine $k$-dimensional simplex code) with a non-systematic generator matrix $\mathbf{G}$ constructed as described in section~\ref{subsec:RM}, if ${k\in \{2,3\}}$ is given by
\begin{align*}
  S(\mathbf{G}) &= \left\{ \lambda \in \mathbb{R}^k_{\geq 0} \,:\, \sum_{i=1}^{k}\lambda_i \leq 2^{k-2} \right\}=\operatorname{conv}\left(\{\mathbf{0},\mathbf{v}_1,\dots,\mathbf{v}_{k}\} \right)
\end{align*}
and if ${k\ge 4}$, is given by
\begin{align*}
S(\mathbf{G})&= \Bigl\{ \lambda \in \mathbb{R}^k_{\geq 0} \,:\, \sum_{i=1}^{k}\lambda_i \leq 2^{k-2},
   \sum_{i=1}^{k-1}\lambda_i + \frac{3}{2} \lambda_k -1 \leq 2^{k-2}\Bigr\}\\
   &= \operatorname{conv}\left(\{\mathbf{0},\mathbf{v}_1,\dots,\mathbf{v}_{k-1},\mathbf{u}_k,\mathbf{w}_1,\dots,\mathbf{w}_{k-1}\}\right),
\end{align*}
where ${\mathbf{v}_i=2^{k-2}\cdot \vek{e}_i}$ for $i \in [k]$ and ${\mathbf{w}_j={(2^{k-2}-2)\cdot \vek{e}_j+2\cdot \vek{e}_k}}$ for $j \in [k-1]$. Also, $\mathbf{u}_{k}=\tfrac{2^{k-1}+2}{3}\cdot \vek{e}_k$.
\end{theorem}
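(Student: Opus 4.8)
The plan is to argue entirely in the geometric picture of Section~\ref{subsec:n-multiset}. After reordering columns, the $n$-multiset $\mathcal{G}$ induced by the non-systematic generator matrix is the set of all $2^{k-1}$ points $(\vek{a},1)$ with $\vek{a}\in\mathbb{F}_2^{k-1}$: a copy of the affine space $\mathbb{F}_2^{k-1}$ sitting in $\PG(k-1,2)$ as the complement of the hyperplane at infinity $\mathcal{H}_\infty=\{x_k=0\}$, with every point of multiplicity $1$, so $\mu(p)=1$ on $\mathcal{G}$ and $\mu(p)=0$ elsewhere. Under this identification $\vek{e}_k$ is the single point $(\vek{0},1)$, while $\vek{e}_1,\dots,\vek{e}_{k-1}$ all lie on $\mathcal{H}_\infty$ and are therefore not points of $\mathcal{G}$. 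Because every column has last coordinate $1$, a reduced recovery set $R$ for $f_i$ with $i\le k-1$ has even size and the affine parts of its points sum to the $i$th unit vector $\vek{e}_i'$ of $\mathbb{F}_2^{k-1}$ (the minimal ones being the $2^{k-2}$ pairs $\{(\vek{b},1),(\vek{b}+\vek{e}_i',1)\}$ of one parallel class), whereas a reduced recovery set for $f_k$ has odd size and the affine parts sum to $\vek{0}$: the minimal ones are the singleton $\{(\vek{0},1)\}$ and the $3$-sets $\{(\vek{a},1),(\vek{b},1),(\vek{a}+\vek{b},1)\}$ with $\vek{a},\vek{b}$ independent, which are in bijection with the lines of $\PG(k-2,2)\cong\PG(k-1,2)/\langle\vek{e}_k\rangle$.

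For the inclusion $\mathcal{S}(\mathbf{G})\subseteq P$, where $P$ is the claimed $H$-description: the inequality $\sum_{i=1}^k\lambda_i\le 2^{k-2}$ is immediate from Corollary~\ref{cor_hyperplane_constraint} applied to $\mathcal{H}=\{\sum_i x_i=0\}$, a hyperplane avoiding every $\vek{e}_i$ and meeting $\mathcal{G}$ in exactly $2^{k-2}$ points, so that $2^{k-1}-2^{k-2}=2^{k-2}$ points of $\mathcal{G}$ lie off $\mathcal{H}$. The second inequality (relevant only for $k\ge 4$) is not a single-hyperplane bound; instead I would exhibit an explicit dual certificate: weight the point $\vek{e}_k$ by $\tfrac32$ and each of the other $2^{k-1}-1$ points of $\mathcal{G}$ by $\tfrac12$. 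A short case check against the recovery-set description shows that every recovery set of $f_i$ ($i\le k-1$) has weight $\ge 1$ and every recovery set of $f_k$ has weight $\ge\tfrac32$ — the parity and size constraints are exactly what forces this, even for large recovery sets — so forming the corresponding nonnegative combination of the capacity constraints~\eqref{condition_capacity} and using~\eqref{condition_demand} yields $\sum_{i=1}^{k-1}\lambda_i+\tfrac32\lambda_k\le\tfrac32+\tfrac12(2^{k-1}-1)=2^{k-2}+1$, i.e.\ $\sum_{i=1}^{k-1}\lambda_i+\tfrac32\lambda_k-1\le 2^{k-2}$. For $k\in\{2,3\}$ this second inequality follows from the first together with $\vek{\lambda}\ge\vek{0}$, since then $\lambda_k\le\sum_i\lambda_i\le 2^{k-2}\le 2$ gives $\sum_{i=1}^{k-1}\lambda_i+\tfrac32\lambda_k=\sum_i\lambda_i+\tfrac12\lambda_k\le 2^{k-2}+1$; so only the first inequality survives there.

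For the reverse inclusion it suffices, by Lemma~\ref{lem:convexity}, to realize each listed vertex. Take $\vek{v}_i=2^{k-2}\vek{e}_i$ with $i\le k-1$: partition the $2^{k-1}$ nodes into the $2^{k-2}$ pairs of the parallel class of direction $\vek{e}_i'$ and assign rate $1$ to each. For $k\in\{2,3\}$, $\vek{v}_k=2^{k-2}\vek{e}_k$ is realized by serving $f_k$ at rate $1$ via $\{(\vek{0},1)\}$ and, when $k=3$, at an additional rate $1$ via the unique line through the three remaining nodes. For $\vek{u}_k=\tfrac{2^{k-1}+2}{3}\vek{e}_k$ with $k\ge4$: serve $f_k$ at rate $1$ via $\{(\vek{0},1)\}$ and, on the remaining $2^{k-1}-1$ nodes (which form $\PG(k-2,2)$), use the uniform fractional line cover, putting weight $1/(2^{k-2}-1)$ on each line; this loads every node by exactly $1$ and delivers total rate $(\#\text{lines of }\PG(k-2,2))/(2^{k-2}-1)=(2^{k-1}-1)/3$ to $f_k$, for a grand total of $\tfrac{2^{k-1}+2}{3}$. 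For $\vek{w}_j=(2^{k-2}-2)\vek{e}_j+2\vek{e}_k$ with $k\ge4$: pick any $\vek{a}\notin\{\vek{0},\vek{e}_j'\}$, serve $f_k$ at rate $1$ via $\{(\vek{0},1)\}$ and at rate $1$ via the line $\{(\vek{e}_j',1),(\vek{a},1),(\vek{e}_j'+\vek{a},1)\}$, and serve $f_j$ at rate $1$ via each of the remaining $2^{k-2}-2$ pairs of the direction-$\vek{e}_j'$ parallel class; since $\{(\vek{0},1),(\vek{e}_j',1)\}$ and $\{(\vek{a},1),(\vek{e}_j'+\vek{a},1)\}$ are themselves two such pairs, every node is loaded exactly $1$, so this is feasible.

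Finally, to match the $H$- and $V$-descriptions: for $k\ge4$ the polytope $P$ cut out by $\lambda_i\ge0$, $C_1\!:\sum_i\lambda_i\le 2^{k-2}$, and $C_2\!:\sum_{i<k}\lambda_i+\tfrac32\lambda_k\le 2^{k-2}+1$ is bounded, hence equals the convex hull of its vertices; enumerating vertices according to which of $C_1,C_2$ is tight (the remaining tight constraints being $k$ minus that many of the $\lambda_i=0$) and discarding the systems whose solution violates a remaining inequality or nonnegativity — this is precisely where one uses $2^{k-2}\ge 4$, i.e.\ $k\ge4$ — produces exactly $\vek{0}$, the $\vek{v}_i$ for $i\le k-1$, $\vek{u}_k$, and the $\vek{w}_j$ for $j\le k-1$; for $k\in\{2,3\}$ only $C_1$ is active and $P$ is the simplex $\operatorname{conv}\{\vek{0},\vek{v}_1,\dots,\vek{v}_k\}$. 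Combining the inclusions $\mathcal{S}(\mathbf{G})\subseteq P$ and $\mathcal{S}(\mathbf{G})\supseteq\operatorname{conv}(\text{vertices})=P$ proves the theorem. I expect the main obstacle to be the second bounding inequality: it is not a hyperplane bound, so one must both find the weighting $(\tfrac32;\tfrac12,\dots,\tfrac12)$ and verify it against \emph{all} recovery sets of $f_k$, not merely the minimal singleton and $3$-sets — this is where the parity-and-size analysis does the real work — with the exact simultaneous packing realizing $\vek{w}_j$ a secondary subtlety, the node budget being consumed with no slack.
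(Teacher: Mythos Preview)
Your proposal is correct and follows essentially the same route as the paper: the hyperplane $\{\sum_i x_i=0\}$ via Corollary~\ref{cor_hyperplane_constraint} for the first inequality, a recovery-set-size argument for the second, and the same vertex constructions (parallel-class pairs for $\vek{v}_i$, the uniform fractional line cover of $\PG(k-2,2)$ for $\vek{u}_k$, and a single triple plus $2^{k-2}-2$ pairs for $\vek{w}_j$). The one genuine difference in presentation is your derivation of the second inequality: the paper argues directly that the total load is at least $2\sum_{i<k}\lambda_i + 3\lambda_k - 2$ by combining the global capacity bound with $r_1\le 1$ on the systematic node, whereas you package the same information as a single dual certificate $\gamma_{\vek{e}_k}=\tfrac32$, $\gamma_p=\tfrac12$ otherwise, which is a cleaner linear combination and avoids the case split on $\lambda_k\ge 1$; you are also more explicit than the paper about why the $H$- and $V$-descriptions of $P$ coincide.
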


\begin{proof}
The proof consists of a converse and an achievability. 

\underline{Converse}: The unit vector $\vek{e}_i$ for all ${i \in [k-1]}$ is not a column of $\mathbf{G}$ which means that file $f_i$ does not have any systematic recovery set. Therefore, for file $f_i$, ${i\in [k-1]}$, all recovery sets have cardinality at least two, and the minimum system capacity utilized by ${\lambda_i}$, ${i \in [k-1]}$, is ${2\lambda_i}$. For file $f_k$, the cardinality of every reduced recovery set is odd since all columns of generator matrix $\mathbf{G}$ has one in the last row. Hence, for file $f_k$, the unit vector $\vek{e}_k$ that is a column of $\mathbf{G}$, forms a systematic recovery set of cardinality one, while all other recovery sets have cardinality at least three. Hence, the minimum capacity used by ${\lambda_k \geq 1}$ is ${1+3(\lambda_k-1)}$. Since the system has $2^{k-1}$ servers, each of service rate (capacity) $1$, based on the capacity constraints, the total capacity utilized by the requests for download must be less than $2^{k-1}$. Thus, any vector $\boldsymbol{\lambda}=(\lambda_1,\dots,\lambda_k)$ in the service rate region must satisfy the following valid constraint,
\begin{equation}
  \label{ie_upper_1}
  \sum_{i=1}^{k-1}\lambda_i + \frac{3}{2} \lambda_k -1 \leq 2^{k-2}
\end{equation}
Consider the hyperplane $\mathcal{H}$ given by $\sum_{i=1}^k x_i=0$, that does not contain any unit vector $\vek{e}_i$. The columns of generator matrix $\mathbf{G}$ and so the elements of $\mathcal{G}$ which are not contained in $\mathcal{H}$, are the vectors in $\mathbb{F}_2^k$ with one in the last coordinate that satisfy $\sum_{i=1}^{k-1} x_i=0$. It is easy to see that there are $2^{k-2}$ such vectors. Thus, applying Corollary~\ref{cor_hyperplane_constraint} for hyperplane $\mathcal{H}$ impose another valid constraint as follows that any demand vector $\boldsymbol{\lambda}=(\lambda_1,\dots,\lambda_k)$ in the service rate region must satisfy, 
\begin{equation}
  \label{ie_upper_2}
  \sum_{i=1}^{k}\lambda_i \leq 2^{k-2}
\end{equation}
It should be noted that for ${\lambda_k<2}$, the Inequality~(\ref{ie_upper_2}) is tighter than~(\ref{ie_upper_1}), while for ${\lambda_k>2}$ Inequality~(\ref{ie_upper_1}) is tighter than~(\ref{ie_upper_2}). This means that for $k\in \{2,3\}$ Inequality~(\ref{ie_upper_1}) is redundant.

\underline{Achievability}: For the other direction, we will provide solutions (constructions) for the vertices of the corresponding polytope as follows. Let ${\mathcal{R}'\subseteq\mathbb{F}_2^k}$, ${|\mathcal{R}'|=2^{k-1}}$ be the set of columns of $\mathbf{G}$ with one in the last coordinate. For all ${i \in [k-1]}$, consider all the $2^{k-2}$ vectors ${\vek{x}\in \mathcal{R}'}$ with zero in the $i$th coordinate, then ${\vek{x}+\vek{e}_i\in\mathcal{R}'}$, and so ${\{\vek{x},\vek{x}+\vek{e}_i\}}$ constitutes a recovery set of cardinality two for file $f_i$. Thus, for each file $f_i$, ${i \in [k-1]}$, the columns of $\mathbf{G}$ can be partitioned into ${2^{k-2}}$ pairs ${\{\vek{x},\vek{x}+\vek{e}_i\}}$ which determines ${2^{k-2}}$ disjoint recovery sets for file $f_i$, ${i \in [k-1]}$. Therefore, the demand vectors ${2^{k-2}\cdot \vek{e}_i}$ for all $i \in [k-1]$ can be satisfied, i.e., ${2^{k-2}\cdot \vek{e}_i \in S(\mathbf{G})}$. For file $f_k$, there are exactly one systematic recovery set of cardinality one which is the column $\vek{e}_k$ of $\mathbf{G}$, and ${(2^{k-1}-1).(2^{k-1}-2)/6}$ recovery sets of cardinality three which are the sets ${\{\vek{x},\vek{x'},\vek{x}+\vek{x'}+\vek{e}_k\}}$ for all pairs ${{\vek{x}, \vek{x'} \in \mathcal{R}' \setminus \vek{e}_k}}$. Note that for ${k=2}$, according to Inequality~(\ref{ie_upper_2}), one can readily confirm that ${\lambda_k \leq 1}$. Thus, for ${k=2}$ the systematic recovery set of file $f_k$ can be utilized for satisfying the demand vector $1\cdot \vek{e}_k$. For ${k\ge 3}$, it should be noted that that each column ${\vek{x} \in \mathcal{R}' \setminus \vek{e}_k}$ is contained in exactly ${(2^{k-1}-2)/2}$ recovery sets of file ${f_k}$ of cardinality three. Since the capacity of each node is one, from each recovery set the request rate of ${1/(2^{k-2}-1)}$ can be satisfied without violating the capacity constraints. Thus, the demand vector ${\tfrac{2^{k-1}+2}{3}\cdot \vek{e}_k}$ can be satisfied. For the remaining part, we consider $k\ge 4$. Let ${i,j\in [k-1]}$ with ${i\neq j}$ be arbitrary. With this ${\{\vek{e}_k,\vek{e}_i+\vek{e}_k\}}$ and ${\{\vek{e}_j+\vek{e}_k,\vek{e}_i+\vek{e}_j+\vek{e}_k\}}$ are two of ${2^{k-2}}$ recovery sets of cardinality two for file ${f_i}$. Thus, the elements in ${\mathcal{R}'\backslash\left\{\vek{e}_k,\vek{e}_i+\vek{e}_k,\vek{e}_j+\vek{e}_k,\vek{e}_i+\vek{e}_j+\vek{e}_k\right\}}$ can be partitioned into ${2^{k-2}-2}$ recovery sets for file ${f_i, i\in [k-1]}$. Also, the sets $\{\vek{e}_k\}$ and ${\left\{\vek{e}_i+\vek{e}_k,\vek{e}_j+\vek{e}_k,\vek{e}_i+\vek{e}_j+\vek{e}_k\right\}}$ can be utilized as two disjoint recovery sets for file $f_k$. Thus, the demand vector ${\left(2^{k-2}-2\right)\cdot \vek{e}_i+2\cdot \vek{e}_k}$ can be satisfied.
\end{proof}

\subsection{Systematic First Order Reed-Muller Codes}

\begin{theorem}\label{theorem:systematic-RM}
For each integer ${k\ge 2}$, the service rate region of first order Reed-Muller code ${\text{RM}_2(1, k-1)}$ (or binary affine $k$-dimensional simplex code) with a systematic generator matrix $\mathbf{G}$ constructed as described in section~\ref{subsec:RM}, if ${k=2}$ is given by
\[
  \mathcal{S}(\mathbf{G})=\left\{ \lambda \in \mathbb{R}^k_{\geq 0} \,:\, \lambda_1 \leq 1, \lambda_2\leq 1 \right\}
  =\operatorname{conv}\left(\mathbf{0}, \vek{e}_1+\vek{e}_2 \right)
\]
if ${k=3}$, is given by
\begin{align*}
  \mathcal{S}(\mathbf{G})= \Big\{ \lambda \in \mathbb{R}^k_{\geq 0} \,:\, -\lambda_i+\sum_{j=1}^3 \lambda_j\le 2,\forall i \in [k]\Big\} =\operatorname{conv}\left(\mathbf{0}, 2\cdot \vek{e}_1, 2\cdot \vek{e}_2, 2\cdot \vek{e}_3, \vek{e}_1+\vek{e}_2+\vek{e}_3 \right)
\end{align*}
if ${k=4}$, is given by
\begin{align*}
  \mathcal{S}(\mathbf{G})&={\Big\{{\lambda \in \mathbb{R}^k_{\geq 0} \,:\,} {-\lambda_i+\sum_{j=1}^k \lambda_j\le 4}, {2\lambda_i+\sum_{j=1}^k\lambda_j \le 10}\, {\forall i \in [k]} \Big\}} \\
  &=\operatorname{conv}\left(\mathbf{0}, \mathbf{p}_i\,\forall i\in [k], \mathbf{q}_{i,j}\,\forall i,j\in [k]\text{ with }i\neq j, \tfrac{4}{3}\cdot\mathbf{1} \right)
\end{align*}
and if ${k\geq 5}$, $\mathcal{S}(\mathbf{G})$ lies inside the region given by
\begin{align*}
  {\mathcal{S}(\mathbf{G}) \subseteq} \, {\Big\{ {\lambda \in \mathbb{R}^k_{\geq 0} \,:\,} {\sum_{i\in[k]\setminus \mathcal{S}} \lambda_i} +{\sum_{j\in \mathcal{S}} (3\lambda_j-2)} \,\le\, {2^{k-1}} \, {\forall \mathcal{S}\subseteq [k]}\Big\}}. 
\end{align*}
where ${\mathbf{p}_i=\tfrac{10}{3}\cdot \vek{e}_i}$ and ${\mathbf{q}_{i,j}={3\cdot \vek{e}_i+1 \cdot \vek{e}_j}}$ for ${i,j \in [k]}$.
\end{theorem}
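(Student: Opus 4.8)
The plan is to pass to the geometric picture of Section~\ref{subsec:n-multiset} and to split the argument into a converse (outer bounds, valid for all $k$) and an achievability part (needed only for $k\le 4$).

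\emph{Setup.} First I would identify the $n$-multiset $\mathcal{G}$ of the systematic code. Going from the non-systematic generator matrix of Section~\ref{subsec:RM} to the systematic one is the invertible row operation that replaces $\vek{r}_0$ by $\vek{r}_0+\vek{r}_1+\dots+\vek{r}_{k-1}$; it fixes the projective equivalence class but transports each column $(\vek{y},1)$ to $(\vek{y},\,\hweight(\vek{y})+1\bmod 2)$. Hence $\mathcal{G}$ is exactly the set of the $2^{k-1}$ odd-weight points of $\PG(k-1,2)$, each with multiplicity one (which matches the $k=4$ example: the four unit vectors and the four weight-$3$ vectors). In this model I would record the combinatorics of the reduced recovery sets: for file $f_i$ the only recovery set of cardinality one is $\{\vek{e}_i\}$ (indeed $\vek{e}_i\in\mathcal{G}$), and since a sum of $r$ odd-weight vectors has weight $\equiv r\pmod 2$ while $\vek{e}_i$ has odd weight, every other reduced recovery set of $f_i$ has odd cardinality $\ge 3$.

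\emph{Converse.} Two families of valid inequalities suffice. (i) A capacity-counting bound: writing $C_i=\sum_{j}\lvert R_{i,j}\rvert\lambda_{i,j}$ for the capacity consumed by file $f_i$, the constraints~\eqref{condition_capacity} give $\sum_i C_i\le\sum_l\mu_l=2^{k-1}$; trivially $C_i\ge\lambda_i$, and since the rate $\lambda_{i,j_0}$ routed through $\{\vek{e}_i\}$ is at most $\mu_{\vek{e}_i}=1$ while every other recovery set of $f_i$ has size $\ge 3$, also $C_i\ge\lambda_{i,j_0}+3(\lambda_i-\lambda_{i,j_0})\ge 3\lambda_i-2$. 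Using for each $i$ whichever lower bound is convenient gives, for every $\mathcal{S}\subseteq[k]$,
\[
\sum_{i\in[k]\setminus\mathcal{S}}\lambda_i+\sum_{j\in\mathcal{S}}(3\lambda_j-2)\ \le\ 2^{k-1},
\]
which is precisely the asserted region for $k\ge 5$ and, specialised to $\mathcal{S}=\{i\}$, yields $2\lambda_i+\sum_j\lambda_j\le 10$ for $k=4$ (and the analogous inequalities for $k\le 3$, which turn out redundant). (ii) A geometric bound via Corollary~\ref{cor_hyperplane_constraint}: the hyperplane $\mathcal{H}_i=\{\vek{x}\in\mathbb{F}_2^k:\sum_{j\ne i}x_j=0\}$ contains $\vek{e}_i$ but no $\vek{e}_j$ with $j\ne i$, and an odd-weight point lies off $\mathcal{H}_i$ exactly when $\sum_{j\ne i}x_j=1$, which together with odd total weight forces $x_i=0$; there are $2^{k-2}$ such points, so Corollary~\ref{cor_hyperplane_constraint} with $\mathcal{I}=[k]\setminus\{i\}$ gives $\sum_{j\ne i}\lambda_j\le 2^{k-2}$, i.e.\ $-\lambda_i+\sum_j\lambda_j\le 2^{k-2}$. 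For $k\in\{2,3,4\}$ the counting bound (i) is strictly too weak for this inequality, so (ii) is essential there; then a short polyhedral check shows that for $k=2,3,4$ the inequalities from (i) with $\lvert\mathcal{S}\rvert\le 1$, those from (ii), and $\boldsymbol{\lambda}\ge\vek{0}$ already cut out a polytope whose vertex set is exactly the one listed.

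\emph{Achievability ($k\le 4$).} It then remains to realise each listed vertex by an explicit fractional schedule; $\mathbf{0}$ is trivial and $k=2$ is the identity code. Along a coordinate axis one uses $\{\vek{e}_i\}$ at rate $1$ together with the size-$3$ recovery sets of $f_i$; for $k=4$ the file $f_i$ has exactly seven such sets, each server $\neq\vek{e}_i$ lying in three of them, so assigning each the rate $\tfrac13$ fills every server to capacity and attains $\mathbf{p}_i=\tfrac{10}{3}\cdot\vek{e}_i$. For $\mathbf{q}_{i,j}=3\vek{e}_i+\vek{e}_j$ one serves $f_j$ through $\{\vek{e}_j\}$ and serves $f_i$ through $\{\vek{e}_i\}$ together with the four size-$3$ recovery sets of $f_i$ that avoid both $\vek{e}_i$ and $\vek{e}_j$, each at rate $\tfrac12$; these four sets meet the remaining six servers in the incidence pattern of the edges of $K_4$, so every such server is again filled exactly. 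For $\tfrac{4}{3}\cdot\mathbf{1}$ one serves each $f_i$ through $\{\vek{e}_i\}$ at rate $1$ and through the unique size-$3$ recovery set of $f_i$ consisting only of weight-$3$ points at rate $\tfrac13$; these $k$ triples are arranged so that each weight-$3$ server is used by exactly three of the files, again filling it to capacity. Matching the achieved vertices with the outer-bound polytope closes the $k\le 4$ cases.

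\emph{Main obstacle.} The non-routine part is the $k=4$ achievability bookkeeping — verifying, for each listed vertex, that the proposed fractional scheduling respects all eight capacity constraints — together with the polyhedral verification that the converse inequalities are complete (not merely valid) for $k=2,3,4$, so that the vertex and half-space descriptions coincide. For $k\ge 5$ there is nothing further to do beyond the counting inequality of step (i).
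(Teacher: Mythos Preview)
Your proposal is correct and follows essentially the same route as the paper: you identify $\mathcal{G}$ as the odd-weight points, derive the same two families of converse inequalities (the capacity-counting family and the hyperplane family via Corollary~\ref{cor_hyperplane_constraint} with $\mathcal{H}_i=\{\sum_{j\ne i}x_j=0\}$), and give the same achievability constructions for the listed vertices. Your capacity argument via $C_i\ge\max(\lambda_i,3\lambda_i-2)$ is in fact a slightly cleaner packaging of the paper's case split on $\lambda_i\lessgtr 1$, and your $K_4$ incidence observation for $\mathbf{q}_{i,j}$ makes explicit what the paper leaves as the claim that rate $1/(2^{k-2}-2)$ per triple ``does not violate the capacity constraints.''
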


\begin{proof}
Based on the construction described in section~\ref{subsec:RM} for a systematic generator matrix $\mathbf{G}$ of the ${\text{RM}_2(1, k-1)}$, it can be seen that the number of ones in each column of $\mathbf{G}$ is odd, and the constructed systematic generator matrix, up to column permutations, is unique. Let the columns of $\mathbf{G}$ which are the set of vectors in ${\mathbb{F}_2^k}$ with odd number of ones, be denoted by ${\mathcal{R}'\subseteq\mathbb{F}_2^k}$, ${|\mathcal{R}'|=2^{k-1}}$.  

\underline{Converse}: For an arbitrary file $f_i$, $i{\in [k]}$, the unit vector $\vek{e}_i$ is a column of $\mathbf{G}$ that forms a systematic recovery set of cardinality one, while all other recovery sets have cardinality at least three. The proof is based on the contradiction approach. Let ${\vek{x},\vek{x}' \in \mathcal{R}'\setminus \vek{e}_i}$. Assume that $\{\vek{x},\vek{x}'\}$ forms a recovery set of cardinality two for file $f_i$, i.e., ${\vek{x}+\vek{x}'=\vek{e}_i}$. Since both $\vek{x}$ and $\vek{x}'$ have an odd number of ones, their sum must have an even number of ones which is a contradiction. Indeed, for all pairs ${\vek{x},\vek{x}' \in \mathcal{R}'\setminus \vek{e}_i}$, the set $\left\{\vek{x},\vek{x}',\vek{x}+\vek{x}'+\vek{e}_i\right\}$ forms a recovery set of cardinality three for file $f_i$, ${i\in [k]}$. Thus, if ${\lambda_i \leq 1}$, the requests for file $f_i$ can be fully satisfied by the systematic recovery set $\{\vek{e}_i\}$ and the system capacity utilized by ${\lambda_i}$ is ${\lambda_i}$. However, for ${\lambda_i \geq 1}$, the system capacity utilized by $\lambda_i$ is at least ${1+3(\lambda_i-1)=3\lambda_i-2}$. Since the system has $2^{k-1}$ servers of capacity $1$, the following constraints are valid constraints such that any vector $\boldsymbol{\lambda}=(\lambda_1,\dots,\lambda_k)$ in the service rate region must satisfy: 
\begin{equation}
  \label{ie_upper_bound_1_rm_systematic}
  \sum_{i\in[k]\backslash \mathcal{S}} \lambda_i +\sum_{j\in \mathcal{S}} (3\lambda_j-2) \,\le\, 2^{k-1}~~~\forall \mathcal{S} \subseteq [k]
\end{equation}

Applying Corollary~\ref{cor_hyperplane_constraint} on all hyperplanes ${\mathcal{H}_j}$, ${j \in [k]}$, given by ${\sum_{i\in [k]\setminus j}x_i=0}$, where each hyperplane ${\mathcal{H}_j}$, ${j \in [k]}$ does not contain any unit vectors $\vek{e}_i$, ${i \in [k]\setminus j}$, yields another set of valid constraints on any demand vector $\boldsymbol{\lambda}=(\lambda_1,\dots,\lambda_k)$ in the service rate region as follows: 
\begin{equation}
  \label{ie_upper_bound_2_rm_systematic}
  \sum_{i\in [k]\setminus j}\lambda_i \leq 2^{k-2} ~~~~\forall j \in [k]
\end{equation}

Note that for ${k\in \{2,3\}}$, Inequality~(\ref{ie_upper_bound_2_rm_systematic}) is tighter than~(\ref{ie_upper_bound_1_rm_systematic}). For ${k=2}$, Inequality~(\ref{ie_upper_bound_2_rm_systematic}) gives ${\lambda_1\le 1}$ and ${\lambda_2\le 1}$. For ${k=3}$, Inequality~(\ref{ie_upper_bound_2_rm_systematic}) gives ${\sum_{i=1}^{3}\lambda_i-\lambda_i\le 2}$ for all ${i\in [3]}$. Summing up these three inequalities and dividing them by two, results ${\sum_{i=1}^{3}\lambda_i\le 3}$. For ${k=4}$, Inequality~(\ref{ie_upper_bound_2_rm_systematic}) yields ${\sum_{i=1}^{4}\lambda_i-\lambda_i\le 4}$ for all ${i\in [4]}$. Summing up these four inequalities and dividing by three gives ${\sum_{i=1}^{4}\lambda_i\le \tfrac{16}{3}}$. Also,  for ${k=4}$, Inequality~(\ref{ie_upper_bound_1_rm_systematic}) gives a set of constraints, among which the constraints ${\sum_{i=1}^{4}\lambda_i+2\cdot \lambda_i\le 10}$ for all ${i\in [4]}$, are tighter than the ones already obtained from~(\ref{ie_upper_bound_2_rm_systematic}) in some region. For ${k\geq 5}$, Inequality~(\ref{ie_upper_bound_1_rm_systematic}) is always tighter than~(\ref{ie_upper_bound_2_rm_systematic}).
  
\underline{Achievability}: For ${k \leq 4}$, we have to provide constructions for the vertices of the corresponding polytope. As discussed, for each file $f_i$, with ${i \in [k]}$, there are exactly one systematic recovery set of cardinality one which is the column $\vek{e}_i$ of $\mathbf{G}$, and ${(2^{k-1}-1).(2^{k-1}-2)/6}$ recovery sets of cardinality three which are the sets of the form ${\{\vek{x},\vek{x'},\vek{x}+\vek{x'}+\vek{e}_i\}}$ for all pairs ${{\vek{x}, \vek{x'} \in \mathcal{R}' \setminus \vek{e}_i}}$. For $k=2$, the two disjoint recovery sets $\{\vek{e}_1\}$ and $\{\vek{e}_2\}$, which are the only recovery sets for files $f_1$ and $f_2$, respectively, can be used to satisfy the demand vector ${\vek{e}_1+\vek{e}_2}$. Now, consider $k\ge 3$. Since each column ${\vek{x} \in \mathcal{R}' \setminus \vek{e}_i}$ is contained in exactly ${(2^{k-1}-2)/2}$ recovery sets of file ${f_i}$, ${i \in [k]}$ of cardinality three, and the capacity of each node is one, from each recovery set the request rate of ${1/(2^{k-2}-1)}$ can be satisfied without violating the capacity constraints. Thus, the demand vector ${\tfrac{2^{k-1}+2}{3}\cdot \vek{e}_i}$ for all $i \in [k]$ can be satisfied. This means that for $k=3$ and $k=4$, respectively the demand vectors ${2\cdot \vek{e}_i}$ for all $i \in [3]$, and ${\tfrac{10}{3}\cdot \vek{e}_i}$ for all $i \in [4]$ can be satisfied. Also, for $k=3$, the demand vector $\vek{e}_1+\vek{e}_2+\vek{e}_3$ can be achieved by the disjoint systematic recovery sets $\{\vek{e}_1\}$, $\{\vek{e}_2\}$, and $\{\vek{e}_3\}$. Now, let assume $k\geq 4$. Let ${i,j\in [k]}$ with ${i\neq j}$ be arbitrary. The systematic recovery sets $\{\vek{e}_i\}$ and $\{\vek{e}_j\}$ can be used for files ${f_i}$ and ${f_j}$, respectively. Additionally, consider all the ${(2^{k-2}-1).(2^{k-1}-4)/3}$ recovery sets ${\{\vek{x},\vek{x'},\vek{x}+\vek{x'}+\vek{e}_i\}}$ of cardinality three for file $f_i$ that do not contain $\vek{e}_j$, each of which can satisfy the request rate of ${1/(2^{k-2}-2)}$ for file $f_i$ without violating the capacity constraints. Thus, the demand vector ${\tfrac{2^{k-1}+1}{3}\cdot \vek{e}_i+1 \cdot \vek{e}_j}$ can be achieved. Therefore, for ${k=4}$ the demand vector ${3\cdot \vek{e}_i+1 \cdot \vek{e}_j}$ for all ${i,j \in [k]}$ with $i\neq j$ can be satisfied. For achieving the demand vector $\tfrac{4}{3}\cdot \vek{1}$, one can use all the systematic recovery sets $\{\vek{e}_1\}$, $\{\vek{e}_2\}$, $\{\vek{e}_3\}$, $\{\vek{e}_4\}$ with capacity $1$. Moreover, the remaining four columns can be used to build up four recovery sets consisting of a unique recovery set of cardinality $3$ for each file $f_i$, $i \in [4]$, and from each of these recovery sets the rate of $\tfrac{1}{3}$ can be satisfied. This completes the proof.
\end{proof}

\section*{Acknowledgment}
Part of this research is based upon work supported by the National Science Foundation under Grant No.~CIF-1717314.

\bibliographystyle{IEEEtran}
\bibliography{ISIT-2020-Long-Version}

\appendix[Proofs of Lemmas and Corollaries]
\begin{proof}[Proof of Lemma~\ref{lem:convexity}]
It can be easily observed that for every service rate vector $\boldsymbol{\mu}$, setting $\lambda_{i,j}=0$, where $i\in [k]$ and ${j\in [t_i]}$, satisfies the set of constraints in~\eqref{eq:1} for the all-zero demand vector of dimension $k$ denoted by ${{\boldsymbol{0}}=(0,\dots,0) \in \mathbb{R}^k}$. Thus, ${\boldsymbol{0}}$ always belongs to the service rate region $\mathcal{S}(\mathbf{G},\boldsymbol{\mu})$. It proves that the service rate region $\mathcal{S}(\mathbf{G},\boldsymbol{\mu})$ is a non-empty subset of $\mathbb{R}^k_{\geq 0}$. Based on the definition of the convex set, we need to show that for all $\boldsymbol{\lambda}$ and $\boldmath{\tilde{\lambda}}$ in $\mathcal{S}(\mathbf{G},\boldsymbol{\mu})$ and for all ${0 \leq \pi \leq 1}$, all vectors $\pi{\boldsymbol{\lambda}}+(1-\pi){\boldmath{\tilde{\lambda}}}$ are in $\mathcal{S}(\mathbf{G},\boldsymbol{\mu})$. Since ${{\boldsymbol{\lambda}} \in \mathcal{S}(\mathbf{G},\boldsymbol{\mu})}$, there exist $\lambda_{i,j}$'s, where $i\in [k]$ and ${j\in [t_i]}$, that satisfy the set of constraints in~\eqref{eq:1} for the demand vector $\boldsymbol{\lambda}$ and the service rate vector $\boldsymbol{\mu}$. Also, since ${{\boldmath{\tilde{\lambda}}} \in \mathcal{S}(\mathbf{G},\boldsymbol{\mu})}$, there exist $\tilde{\lambda}_{i,j}$'s, where $i\in [k]$ and ${j\in [t_i]}$, that satisfy the set of constraints in~\eqref{eq:1} for the demand vector ${\boldmath{\tilde{\lambda}}}$ and the service rate vector $\boldsymbol{\mu}$. One can easily confirm that $({\pi{\lambda_{i,j}}+(1-\pi){\tilde{\lambda}_{i,j}}})$'s, where $i\in [k]$ and ${j\in [t_i]}$, also satisfy the set of constraints in~\eqref{eq:1} for the demand vector ${\pi{\boldsymbol{\lambda}}+(1-\pi){\boldmath{\tilde{\lambda}}}}$ for all $0 \leq \pi \leq 1$, and the service rate vector $\boldsymbol{\mu}$. Thus, $\pi{\boldsymbol{\lambda}}+(1-\pi){\boldmath{\tilde{\lambda}}}$ belongs to $\mathcal{S}(\mathbf{G},\boldsymbol{\mu})$ for all $0 \leq \pi \leq 1$. This completes the proof of convexity of the service rate region $\mathcal{S}(\mathbf{G},\boldsymbol{\mu})$. Summing up the set of constraints in~\eqref{condition_capacity} leads us to:
\[\sum_{l=1}^{n}\sum_{i=1}^{k}~\sum_{\substack{{j\in [t_i]} \\ {l \in R_{i,j}}}} \lambda_{i,j} \leq \sum_{l=1}^{n} \mu_l\]
Changing the order of the sums and utilizing the fact that ${\sum_{l=1}^{n}~\sum_{\substack{{j\in [t_i]} \\ {l \in R_{i,j}}}} \lambda_{i,j}=\sum_{j=1}^{t_i} \lambda_{i,j}}$, we obtain 
\[\sum_{i=1}^{k}\sum_{j=1}^{t_i} \lambda_{i,j} \leq \sum_{l=1}^{n} \mu_l.\]
Using~\eqref{condition_demand}, we rewrite the last inequality to
\begin{equation}\label{eq:closed}
    \sum_{i=1}^{k} \lambda_{i} \leq \sum_{l=1}^{n} \mu_l
\end{equation}
The equation~\eqref{eq:closed} indicates that the elements of every vector ${\boldsymbol{\lambda}}\in \mathcal{S}(\mathbf{G},\boldsymbol{\mu})$ are bounded. It also shows that all demand vectors ${\boldsymbol{\lambda}}=(\lambda_{1},\dots,\lambda_{k})$ with $\sum_{i=1}^{k} \lambda_{i} > \sum_{l=1}^{n} \mu_l$ are not in $\mathcal{S}(\mathbf{G},\boldsymbol{\mu})$. Hence, $\mathcal{S}(\mathbf{G},\boldsymbol{\mu})$ is closed and bounded.
\end{proof}

\begin{proof}[Proof of Corollary~\ref{lem:polytope}]
Based on Lemma~\ref{lem:convexity}, the service rate region $\mathcal{S}(\mathbf{G},\boldsymbol{\mu})$ is a convex and bounded subset of the $\mathbb{R}^k_{\geq 0}$, which indicates that $\mathcal{S}(\mathbf{G},\boldsymbol{\mu})$ is a polytope. Thus, according to~\cite[Theorem 4]{jones2004equality}, it can be described as the two mentioned forms, i.e., the intersection of a finite number of half spaces or the convex hull of a finite set of vectors (the vertices of the polytope). 
\end{proof}

\begin{proof}[Proof of Lemma~\ref{lemma_hyperplane_constraint}]
Since $s\cdot \vek{e}_i\in\mathcal{S}(\mathbf{G},\boldsymbol{\mu})$, it means that the request rate of $s$ for file $f_i$ is satisfied by the coded storage system. Whatever the used recovery sets for file $f_i$ are, some points outside of $\mathcal{H}$ have to be used since the points in $\mathcal{H}$ are not able to generate $\vek{e}_i$. Thus, replacing each recovery set in $\mathcal{R}_i$ by an arbitrary contained point outside of hyperplane~$\mathcal{H}$, completes the proof. 
\end{proof}

\begin{proof}[Proof of Corollary~\ref{cor_min_distance}]
Since for all ${i \in [k]}$, ${s\cdot \vek{e}_i\in\mathcal{S}(\mathbf{G},\boldsymbol{\mu})}$ holds, this means that for all files $f_i$, $i \in [k]$, the request rate of $s$ can be satisfied by the coded storage system. Thus, if we consider any hyperplane $\mathcal{H}$ in ${\PG(k-1,q)}$, it does not contain at least one of the $\vek{e}_i$'s for $i \in [k]$. %% which 
In the special case of unit service rate of all servers, based on Lemma~\ref{lemma_hyperplane_constraint} results in %%the following
\begin{equation}
  %\label{ie_hyperplane_constraint}
  \nonumber
  s\le \# \left(\mathcal{G}\backslash \mathcal{H}\right):=\#\mathcal{G}-\#\left(\mathcal{G}\cap \mathcal{H}\right)=n-\#\left(\mathcal{G}\cap \mathcal{H}\right).
\end{equation}
Since for every hyperplane $\mathcal{H}$ in ${\PG(k-1,q)}$, $s\le n-\#\left(\mathcal{G}\cap \mathcal{H}\right)$ holds, according to the Proposition~\ref{Prop:mindis} and based on the fact that the minimum distance $d$ is integer, we have ${\lceil s\rceil \leq d}$.
\end{proof}

\begin{proof}[Proof of Corollary~\ref{cor_hyperplane_constraint}]
Since ${\sum_{i\in\mathcal{I}} s_i\cdot \vek{e}_i\in\mathcal{S}(\mathbf{G},\boldsymbol{\mu})}$, based on Lemma~\ref{lem:convexity}, ${s_i\cdot \vek{e}_i\in\mathcal{S}(\mathbf{G},\boldsymbol{\mu})}$ holds for all ${i \in \mathcal{I}}$. On the other hand, the hyperplane $\mathcal{H}$ of ${\PG(k-1,q)}$ does not contain any ${\vek{e}_i}$ for all ${i \in \mathcal{I}}$. Thus, by applying Lemma~\ref{lemma_hyperplane_constraint} for each ${i \in \mathcal{I}}$, we get ${s_i \leq \sum_{p \in {\PG(k-1,q) \setminus \mathcal{H}}} \mu(p)}$. Summing up all these inequalities gives
\[
  s=\sum_{i \in \mathcal{I}} s_i \leq \sum_{p \in {\PG(k-1,q) \setminus \mathcal{H}}} \mu(p).
\]
\end{proof}

\end{document}